\newtheorem{theorem}{Theorem}[section]
\newtheorem{lemma}[theorem]{Lemma}
\newtheorem*{theorem*}{Theorem}
\newtheorem{proposition}[theorem]{Proposition}
\newtheorem*{corollary}{Corollary}
\theoremstyle{definition}
\newtheorem{definition}{Definition}[section]
\theoremstyle{remark}
\newcommand{\Impl}{\Rightarrow}
\newcommand{\impl}{\rightarrow}
\newcommand{\et}{\wedge}
\newcommand{\all}{\forall}
\newcommand{\vphi }{\varphi}
\newcommand{\cho}[1]{\langle#1)}
\newcommand{\chot}{\oplus}
\newcommand{\efq}{\mathrm{efq}}
\newcommand{\thr}{\varepsilon}
\newcommand{\lam}{\lambda}
\newcommand{\lan}{\langle}
\newcommand{\ran}{\rangle}
\newcommand{\trust}{\mathrm{Trust}}
\newcommand{\tconst}{\mathrm{T}}
\newcommand{\calc}{\mathcal{C}}
\newcommand{\cald}{\mathcal{D}}
\newcommand{\calt}{\mathcal{T}}
\newcommand{\boldn}{\mathbb{N}}
\newcommand{\boldq}{\mathbb{Q}}
\newcommand{\bolds}{\mathbb{S}}
\newcommand{\red}{\mathrm{Red}}
\newcommand{\cruno}{\mathrm{CSN}}
\newcommand{\crdue}{\mathrm{CEval}}
\newcommand{\crtre}{\mathrm{CAntiEval}}
\providecommand{\keywords}[1]
{
  \small	
  \textbf{\textit{Keywords.}} #1
}
\title{A Strongly Normalising\\System of Dependent Types\\for Transparent and Opaque\\Probabilistic Computation}
\author{Francesco A. Genco\thanks{Work funded by the PRIN project n.2020SSKZ7R BRIO (Bias, Risk and Opacity in AI).}\\
{\small LLC, Department of Philosophy and Education Sciences, University of Turin}}
\begin{document}

\maketitle

%
\keywords{
probabilistic computation; non-deterministic computation; opacity; type theory; dependent types}

\section{Introduction}

We define an extension of  $\lam$-calculus with dependents types that enables us to encode transparent and opaque probabilistic programs and prove a strong normalisation result for it by a {\it reducibility} technique. While transparent non-deterministic programs are formalised by rather usual techniques, opaque non-deterministic programs are formalised by introducing in the syntax oracle constants, the behaviour of which is governed by oracular functions. The generality of these functions and the fact that their values are determined by the form of the whole term inside which the relative  oracle occurs also enable us to simulate learning-like behaviours. 

We then extend the calculus in order to define a computational trustworthiness predicate.
The extension of the  calculus does not only enable us to precisely formalise a notion of trustworthiness and to encode the procedures required to test it on programs, but also to reason, by means of the type system, on the behaviour of programs with respect to trustworthiness.

\section{Kinds, types and terms}

\begin{definition}[Typeless Terms]\label{def:terms} The
terms of the calculus are defined by the following grammar:
\begin{align*}x,y,z\ldots  \;\; ::=  \;\; & x_0 \;\; \mid\;\; x_1 \;\; \mid\;\; x_2\;\; \ldots  
\\
o  \;\; ::=  \;\; & o_0 \;\; \mid\;\; o_1 \;\; \mid\;\; o_2\;\; \ldots    
\\
t,s,u,v\ldots  \;\; ::= \;\; & x \;\; \mid\;\; o  \;\; \mid\;\; ot  \;\; \mid\;\;   o\nu  \;\; \mid\;\; (ot)\nu  \;\; \mid \\ &   \lambda x:\vphi .t \;\; \mid\;\;  ts  \;\; \mid \\ & t  \cho{p} s\;\; \mid\;\; t\nu  \;\; \mid
\\
&\langle t,s \ran  \;\; \mid\;\; t\pi_j   \;\; \mid
 \;\; [\kappa  ]
   \;\; \mid\;\; 
 [t, [\kappa_1 /\dots/\kappa _n] , s]
\end{align*}
where $n\in \boldn$, $j\in \{0,1\}$ and $p\in \{i\mid i\in \boldq,  0\leq i\leq 1\}$\end{definition}

\begin{definition}[Kindless Type constructors]\label{def:types} The
terms of the calculus are defined by the following grammar:
\begin{align*}\alpha  \;\; ::=  \;\; & \alpha _0 \;\; \mid\;\; \alpha _1 \;\; \mid\;\; \alpha _2\;\; \ldots
\\
\vphi   \;\; ::= \;\; & \alpha  \;\; \mid\;\;  \lambda  x :\vphi _1 . \vphi _2   \;\; \mid\;\; \vphi  t    \;\; \mid\;\;   \all  x :\vphi _1 . \vphi _2  \;\; \mid\;\; \chot\vphi \;\; \mid\;\; \Sigma \vphi 
 \;\; \mid\;\;
\vphi _1 \et\vphi _2\end{align*}\end{definition}

The standard part of the grammar for terms includes variables $x_0,x_1\dots $ for terms, applications and $\lambda$-abstractions. Nondeterministic choice terms of the form $t\cho{p}s$ are supposed to represent programs that nondeterministically reduce to either $t$, with probability $p$, or to $s$, with probability $1-p$. 
The terms of the form $[\kappa ]$, where $\kappa$ is a possibly empty list $t _1 , \dots , t_n$ of terms, denote computations represented as reductions of $t_i$ in $t_{i+1}$ for $1\leq i\leq n $. If a term representing a computation is of the form $[\kappa]^p$, then it also indicates that the computation has probability $p$ of developing as specified by the list $\kappa$ of terms. 

Instead of defining a grammar of types directly, we define, as per standard practice, a grammar of type constructors, that is, syntactic objects that are supposed to evaluate to types. Thus we can handle smoothly the definition of types that depend on terms.
According to the grammar, a type constructor can be a type constructor variable $\alpha_1, \alpha _2\dots $, a $\lambda$-abstraction of a type constructor---where the bound variable is a variable for terms---or the application of a type constructor to a term. Type constructor applications are supposed to be evaluated in order to reduce the type constructor to a proper type, which is a type constructor that does not contain occurrences of $\lambda$ but only occurrences of variables $\alpha_1, \alpha_2\dots $ and of $\all$  quantifiers. When we form a type constructor, we can also generalise a type constructor by the $\all$ quantifier---where the bound variable is a variable for terms. As opposed to the occurrences of $\lambda$, the occurrences of these quantifiers are not supposed to disappear during the evaluation of the type constructor. Indeed, $\lambda $ in a type constructor of the form $(\lambda x:\vphi _1 . \all y:\vphi  _1 \alpha xy)t$ is meant to have  the resulting type depend on the term $t$ given as an argument to the type constructor, while  $\all $ is meant to stay as a first-order quantifier occurring also in the type resulting from the evaluation of the constructor: $(\lambda x:\vphi _1 . \all y:\vphi  _1 \alpha xy)t$ evaluates to $\all y:\vphi  _1 \alpha ty$, which is a type. Notice that the variables $\alpha, \alpha_1, \alpha_2, \dots $ are meant to represent first-order predicates. That is, what we would normally write as $\all x.\all y.P(x,y)$ is here written as $\all x:\vphi  _1.\all y:\vphi  _2.\alpha xy $ where $\alpha $ is meant as a type constructor variable corresponding to the predicate symbol $P$. As specified by the deductive system below, the kind of $\alpha$ will determine its arity.

Type constructors encode functions just as terms do, with the only difference that terms take terms as inputs and yield terms as outputs while type constructors take terms as inputs and yield types as outputs. Therefore, in order to guarantee that their rewriting system is well-behaved, we introduce a system to constrain their usage and obtain a terminating system. Just as we assign types to terms, we assign kinds to type constructors.  

The grammar of kinds is rather simple: we have a base case, the constant $*$ denoting the kind of types, and an abstraction $\Pi$ that binds term variables.\footnote{Notice that $x$ can never occur as a sub-kind of a kind $\Pi x:\vphi  . \Phi$. Hence, quantified kinds of dependent type constructors can be seen as implication kinds of the form $\vphi  \Impl \Phi$ (similarly to what we will do for types $\all x:\vphi _1 . \vphi _2 $ where $x$ does not occur in $\vphi _2$) . For instance, all the information provided in the kind of the the type constructor $(\all x:\vphi . \alpha x): (\Pi x:\vphi .* )$ is also contained in the kind $\vphi  \Rightarrow * $ since $x$ does not occur in $*$.} For instance

\begin{definition}[Kinds]\label{def:kinds} The
kinds of the calculus are defined by the following grammar:
\begin{align*}\Phi   \;\; ::=  \;\; & * \;\; \mid\;\; \Pi x : \vphi  . \Phi 
\end{align*}
\end{definition}

The notions of {\it free variables}, {\it bound variables} and {\it substitution} are defined as usual, see, for instance, \cite{sor06}.



\begin{definition}[Variables occurring in a kind] The term variables occurring in a kind $\Phi$ are all the term variables that appear in $\Phi$ but not immediately to the right of a $\Pi$.
\end{definition}

\begin{definition}[Subconstructors and variables occurring in a type constructor]
\label{def:subconstructor}
The subconstructors of a type constructor are inductively defined as follows. 
\begin{itemize}
\item $\alpha x_1 \dots x_n$ (possibly for $n=0$)  is a subconstructor of $\alpha x_1 \dots x_n $.

\item $\lambda  x :\vphi _1 . \vphi _2$ and all subconstructors of $\vphi _2$ 
are subconstructors of $\lambda  x :\vphi _1 . \vphi _2$.

\item $\vphi  t$ and all subconstructors of $\vphi $ 
are subconstructors of $\vphi  t$.

\item $\all  x :\vphi _1 . \vphi _2  $ and all subconstructors of $\vphi _2   $ 
are subconstructors of $\all  x :\vphi _1 . \vphi _2  $.

\item $\chot\vphi$ and all subconstructors of $\vphi $
are subconstructors of $\chot\vphi$.

\item $\Sigma \vphi $ and all  subconstructors of $\vphi $ 
are subconstructors of $\Sigma \vphi $.
\end{itemize}
The constructor variables occurring in a type constructor $\vphi$ are all the constructor variables that appear in $\vphi$.

The term variables occurring in a type constructor $\vphi$ are all the term variables that appear in $\vphi$ but not immediately to the right of a $\lambda$ or $\all$.
\end{definition}

\begin{definition}[Subterms and variables occurring in a term]\label{def:subterm}
The subterms of a term are inductively defined as follows.
\begin{itemize}
\item $x$ is a subterm of $x$, $o$ is a subterm of $o$.

\item $\lambda x:\vphi .t $ and all subterms of $t$ are subconstructors of $\lambda x:\vphi .t$.

\item $ts $ and all subterms of $t$ and $s $ are subconstructors of $ts$.

\item $t  \cho{p} s$ and all subterms of $t$ and $s$ are subconstructors of $t  \cho{p} s$.

\item $t\nu$ and all subterms of $t $ are subconstructors of $t\nu$.

\item $\langle t,s \ran$ and all subterms of $t$ and $s$  are subconstructors of $\langle t,s \ran$.

\item $t\pi_j$ and all subterms of $t$ are subconstructors of $t\pi_j$.

\item $[\kappa ]^p
$ (where $p$ might not occur) and all subterms of $\kappa$ are subconstructors of $[\kappa ]^p$.

\item $[t, [\kappa_1 /\dots/\kappa _n] , s]^p
$ (where $p$ might not occur) and all subterms of $t, \kappa_1 , \dots ,\kappa _n$ and $ s$ are subconstructors of $[t, [\kappa_1 /\dots/\kappa _n] , s]^p
$.
\end{itemize}

The term variables occurring in a term $t$ are all the term variables that appear in $t$ but not immediately to the right of a $\lambda$ or $\all $.
\end{definition}

\begin{definition}[Binder scope, bound variables, free variables]\label{def:free-variables}
The {\it scope} of the displayed occurrence of $\all x :A $ in $\all x :A . \vphi $  is $\vphi $.
The  {\it scope} of the displayed occurrence of $\lam x:A$ in $\lam x:A .\vphi$ is $\vphi $. 
The  {\it scope} of the displayed occurrence of $\lam x :A $  in $\lam x :A . t $ is $t$.


An occurrence of the variable $x:A$ in $\vphi $  is {\it bound in $\vphi $} if it occurs inside the scope of an occurrence of $\lam x :A $ or $\all x :A$ in $\vphi$. An occurrence of the variable $x:A$ in $t$ is {\it bound in $t$} if it occurs inside the scope of an occurrence of $\lam x :A $ or $\all x :A$ in $t$.

A variable occurrence in $\vphi$ is {\it free} if it is not bound in $\vphi$. A variable occurrence in $t$ is {\it free} if it is not bound in $t$.
\end{definition}

\begin{definition}[Substitution]\label{def:substitution}
For any two terms $t$ and $s$ and term variable $x$, the term substitution $t[s/x]$ denotes the term obtained by uniformly and simultaneously replacing all free occurrences of $x$ in $t$ by an occurrence of $s$. 

For any type constructor $\vphi$, term $t$, and term variable $x$, the type constructor substitution $\vphi[t/x]$ denotes the type constructor obtained by uniformly and simultaneously replacing all free occurrences of $x$ in $\vphi$ by an occurrence of $t$. 

For any two type constructors $\vphi$ and $\psi$, and type constructor variable $\alpha$, the type constructor substitution $\vphi[\psi /\alpha ]$ denotes the type constructor obtained by uniformly and simultaneously replacing all occurrences of $\alpha$ in $\vphi$ by an occurrence of $\psi$. 

\end{definition}

Notice that---since a type constructor variable $\alpha$ is never bound by an occurrence of $\Pi, \forall$ or $\lam$---type constructor substitutions, such as  $[\psi/\alpha]$, are neither employed in deductions nor while normalising type constructors. We only introduce these substitutions as technical devices to define type constructor contexts.

\begin{definition}[Term contexts and type constructor contexts]\label{def:contexts}
A term context $\calc[\;]_1\dots [\;]_n$ is a term in which, for each $i\in \{1 , \dots , n\}$, a designated term variable $x_i$ occurs exactly once. By $\calc[\;]_1\dots[t]_m\dots [\;]_n$ with $1\leq m \leq n$ we denote the term obtained by applying the following substitution: $(\calc[\;]_1\dots [\;]_n)[t/x_m]$.

A type constructor context $\calt[\;]_1\dots [\;]_n$ is a type constructor in which, for each $i\in \{1 , \dots , n\}$, a designated  type constructor variable $\alpha_i$ occurs exactly once. By $\calt[\;]_1\dots[\vphi ]_m\dots [\;]_n$ with $1\leq m \leq n$ we denote the type constructor obtained by applying the following substitution: $(\calc[\;]_1\dots [\;]_n)[\vphi  /\alpha_m]$.\end{definition}

\section{The calculus}

{\it Notation}. We use $\Gamma, \Delta $ to denote lists of type and kind assignments for term and type constructor variables, respectively. We use $s,t,u,v$ as metavariables for terms, $A,B,C,D,E,F,T$ for types, $\vphi , \psi, \xi$ for type constructors, and $\Phi, \Psi, \Xi$ for kinds. We use $x,y,z$ for term variables and $\alpha, \beta, \gamma$ for constructor variables. We use $\kappa $ to denote a list of terms. We add subscripts when necessary.

\begin{definition}[Oracle constants and oracular functions]\label{def:oracles}


Each oracle constant $o$ has either arity $0$ or $1$, and is associated to a type $T$ and to an oracular function $f_o(\;,\;)$. 

If $o_0$ is a $0$-ary oracle constant, then $T=\Sigma A$ for any type $A$ and $f_{o_0}(\;,\;)$ is any function that accepts as first input a term context $\calc[\;]_1\dots [\;]_n $ with any number $n$ of holes and as second input a non-zero number $m\leq n$. The output of the function is a closed term of type $A$ that does not contain any oracle constant.

If $o_1$ is a $1$-ary oracle constant, then  $T=\all x:A.\Sigma B$ and $f_{o_1}(\;,\;)$ is any function that accepts as first input a term context $\calc[\;]_1\dots [\;]_n $ with any number $n$ of holes and as second input a non-zero number $m\leq n$. The output of the function is a closed term of type $B[t/x]$ that does not contain any oracle constant, where $t$ is the term to which the $m$th hole of $\calc[\;]_1\dots [\;]_n$ is applied.

\end{definition}
Oracular functions are used to determine the reducti of oracle constants. In particular, as we will see in Table \ref{tab:evaluation}, if the oracular function application $f_o(\calc[\;]_1\dots [\;]_n , m)$ has as value the term $u$, then we use $u$ as reductum of $o$---possibly applied to some argument---when it occurs in the $m$th hole of the context $\calc[\;]_1\dots [\;]_n$. Hence, for instance, if $o$ does not accept arguments, then $\calc[o]_1\dots [o]_n$ will reduce to a term of the form $\calc[\dots]_1\dots[u]_m\dots [\dots ]_n$ if $f_o(\calc[\;]_1\dots [\;]_n , m)=u$. Since the values of $f_o$ depend on a context representing the whole term in which $o$ occurs, there is no need to specify whether $o$ is applied to some argument or not, this information is already contained in the first argument of the function. So, in case $o$ does accept an argument, then the term $\calc[os_1]_1\dots [os_n]_n = \cald[o]_1\dots [o]_n$ will reduce to a term of the form $\calc[\dots ]_1\dots[u]_m\dots [\dots]_n$ if $f_o(\cald[\;]_1\dots [\;]_n , m)=u$. Thus, in general, we use applications $f_o(\calc[\;]_1\dots [\;]_n , m)$ of oracular functions in which $1\leq m\leq n$.

Oracular functions are defined with respect to contexts with possibly several holes because thus we can encode, by using oracle constants, the behaviour of functions implementing learning mechanisms. Indeed, when we reduce a term of the form $\calc[os_1]_1\dots [os_n]_n = \cald[o]_1\dots [o]_n$  to a term of the form $\calc[u_1]_1\dots [u_n]_n$ since $f_o(\cald[\;]_1\dots [\;]_n , i)=u_i$ for $1\leq i\leq n$, we have that the values produced by the different terms $os_i$ can be determined by $f_o$ as if they were produced through a learning process.

\begin{figure*}[t]\centering

\[\vcenter{\infer{\Gamma _1 , \alpha :\Phi  \vdash \alpha  :\Phi}{\Gamma _1 \vdash \Phi :\square }}\qquad \vcenter{\infer{\Gamma _2 , x :A  \vdash x :A}{\Gamma _2 \vdash A :* }}\]where $\alpha$ does not occur in $\Gamma  _1 $ and $x$ does not occur in $\Gamma _2$

\[ \vdash *:\square\qquad \vcenter{\infer{\Gamma \vdash ( \Pi x :A .\Phi  ):\square}{\Gamma , x:A\vdash \Phi  :\square }}\qquad \vcenter{\infer{\Gamma \vdash (\all x:A.B):*}{\Gamma , x:A\vdash B:* }}\]

\[ \infer{\Gamma \vdash \chot A:*}{\Gamma \vdash A:*}\qquad \infer{\Gamma \vdash \Sigma A:*}{\Gamma \vdash A:*}\qquad \infer{\Gamma \vdash A\wedge B:*}{\Gamma \vdash A:*& \Gamma \vdash B:*}\]

\caption{Formation and axiom rules for kinds and types}
\label{tab:formation-axiom-rules}
\end{figure*}

\begin{figure*}[t]\centering

\[ \vcenter{\infer{\Gamma \vdash (\lambda x:{A}. \vphi  ) : (\Pi x:A. \Phi)}{\Gamma , x:A\vdash \vphi  :\Phi }}  \qquad \vcenter{\infer{\Gamma ,   \vdash (\vphi  t):\Phi [t/x]}{\Gamma \vdash  \vphi   :(\Pi x:{A}.\Phi) & \Gamma \vdash t  :A }} \]

\caption{Kind assignment rules}
\label{tab:kind-rules}
\end{figure*}

\begin{figure*}[t]\centering

\[\infer{\Gamma , \alpha :\Psi \vdash \Phi :\square}{\Gamma  \vdash \Psi:\square   & \Gamma \vdash\Phi :\square }\qquad\infer{\Delta , x :A \vdash \Phi :\square}{\Delta  \vdash A:*   & \Delta \vdash\Phi :\square }\]
\[\infer{\Gamma , \alpha :\Psi \vdash \vphi  :\Phi}{\Gamma  \vdash \Psi:\square   & \Gamma \vdash \vphi  :\Phi }\qquad\infer{\Delta , x :A \vdash  \vphi  :\Phi}{\Delta  \vdash A:*   & \Delta \vdash \vphi  :\Phi }\]
\[\infer{\Gamma , \alpha :\Psi \vdash t :A}{\Gamma  \vdash \Psi:\square   & \Gamma \vdash t:A }\qquad\infer{\Delta , x :A \vdash  t:B}{\Delta  \vdash A:*   & \Delta \vdash t:B }\]

where $\alpha$ does not occur free in any element of $\Gamma$ 

and $
x$  does not occur free in any element of  $\Delta$
\caption{Weakening rules}
\label{tab:weakening-rules}
\end{figure*}

\begin{figure*}[t]\centering

\[\infer{\Gamma \vdash (\lambda x:A.t):(\all x:A .B)}{\Gamma , x:A \vdash t:B}\qquad \infer{\Gamma \vdash ts:B[s/x]}{\Gamma  \vdash t : (\all x:A . B) & \Gamma \vdash s:A}\]

\[\vcenter{\infer{\Gamma \vdash \efq(t):P}{\Gamma \vdash t : \bot & \Gamma \vdash P:*}}\quad  \text{where $P$ does not contain $\all$}\]

\[\vcenter{\infer{\Gamma \vdash \langle s,t\rangle :A\wedge B}{\Gamma \vdash t : A & \Gamma \vdash s:B}}\qquad\vcenter{\infer{\Gamma \vdash t\pi_i:A_i}{\Gamma \vdash t:A_0\wedge A_1}}\quad \text{where $i\in \{0,1\}$}\]

%

\[ \infer{\Gamma \vdash t\cho{p}s: \chot A}{\Gamma  \vdash t:A & \Gamma \vdash s: A }\qquad  \infer{\Gamma \vdash o:\Sigma A}{\Gamma \vdash A:*}  \]where  $o$ is an oracle constant  and $A$ is the type associated to it by definition

\[ \infer{\Gamma \vdash t\nu : A}{\Gamma \vdash t:\chot A}\qquad  \infer{\Gamma \vdash o\nu :A}{\Gamma \vdash o:\Sigma A}  \]


\[\infer{\Gamma \vdash o_1t:\Sigma (B[t/x])}{\Gamma \vdash o_1:\all x:A.\Sigma B &\Gamma \vdash t:A}\qquad \infer{\Gamma \vdash\langle o_0\nu, \ldots , o_0\nu \ran : A^n }{\Gamma \vdash o_0\nu:A}\]\[ \infer{\Gamma \vdash \langle (o_1t_1)\nu, \ldots , (o_1t_n)\nu \ran : B[t/x]^n }{\Gamma \vdash o_1:\all x :A. \Sigma B &  \Gamma \vdash t_1:A &\dots & \Gamma \vdash t_n:A} \]where $o_1$ is a $1$-ary oracle constant, $o_0$ is a $0$-ary oracle constant

%

%


%

%

\[
\infer[]{\Gamma \vdash \vphi  : \Psi}{\Gamma \vdash \vphi  :\Phi  & \Gamma \vdash \Psi :\square }
\qquad
\infer{\Gamma \vdash s : D}{\Gamma \vdash t :C  & \Gamma \vdash D :* } 
\]where $\Phi\equiv_{\beta}\Psi$, $C\equiv_{\beta}D$ and $t\equiv_{\beta}s$  according to Definition \ref{def:conv-equiv}

%
%
%
%
%


\caption{Type assignment and conversion rules}
\label{tab:type-conv-rules}
\end{figure*}

\begin{definition}[Conversion equivalence $\equiv_{\beta}$]
\label{def:conv-equiv}
Let us define the conversion relation based on $\beta$-reduction for type constructors as follows: 
\[\dots (\lambda x :\vphi  _1 . \vphi _2)t \dots \mapsto \dots  \vphi _2[t/x] \dots \]
where $\dots (\lambda x :\vphi  _1 . \vphi _2)t \dots$ is any type constructor, term or kind.
%
%
Let us, moreover, define $\mapsto ^* $ as the reflexive and transitive closure of $\mapsto$. For any two type constructors, terms or kinds $X$ and $X '$, $X  \equiv_{\beta}X'$ if, and only if, either $X  \mapsto ^* X  '$ or $X  ' \mapsto ^* X  $. 


\end{definition}

Since types in our system are constructed by also employing terms, a simple way to guarantee that a type is well-formed is to build it by applying rules similar to those used to construct terms. The rules in Tables \ref{tab:formation-axiom-rules}, \ref{tab:kind-rules} and \ref{tab:weakening-rules} precisely enable us to do so. In particular, the first three rules in Table \ref{tab:formation-axiom-rules} make it possible, starting from the atomic kind $*$, to construct complex kinds and, starting from an atomic type, to construct complex types by introducing the universal quantifier $\Pi$ for kinds and $\all$ for types. Once we have thus constructed a kind $\Phi$, we can derive an axiom-shaped sequent in which a type constructor variable $\alpha$ of kind $\Phi$ is added to the context and the same type constructor variable $\alpha:\Phi$ occurs on the right-hand side of the sequent. 
Similarly, for types, when we have formed a type $A$, we can derive the axiom-shaped sequent with the variable term  $x:A$ both in the context and on the right-hand side of the sequent.

By using these rules we can obtain complex kinds and types from atomic ones, but we can only assign them to atomic objects: type constructor variables such as $\alpha $ and term variables such as $x$. The kind assignment rules, in Table \ref{tab:kind-rules}, precisely enable us to form complex type constructors while we assign them a suitable kind. The type assignment rules, in Table \ref{tab:type-conv-rules}, enable us to do the same thing but for terms: we can form complex terms by starting from term variables, such as $x$, while we assign them a suitable type. In simply typed $\lambda$-calculus, we only have type assignment rules. Indeed, types in simply typed $\lambda$-calculus can be defined without reference to terms and hence there is no need for a mechanism to handle by levels---in a way suitable for a type theory---the definition of types and terms. It is, indeed,  the reciprocal dependence of terms from types and of types from terms that makes it useful to handle also the construction of types, and not only of terms, by derivation rules. The presented system, in order to orderly stratify the interdependent sets of types and terms, uses the following idea: types that do not depend on terms can be directly obtained, terms can be directly obtained possibly using previously obtained types---those of the variables in the context, obtained by type formation and weakening rules---and, finally, types that do depend on terms can be obtained by applying type constructors to previously obtained terms. In order to construct a type constructor we must first construct the atomic kinds of its type constructor variables, and then construct the type constructor itself by kind assignment rules, Since kinds, just as types in simply typed $\lambda$-calculus, do not depend on terms, they can be directly obtained by kind formation rules and kind assignment rules without resorting to kind constructors.

The fact that we are able to derive type constructors, which are not terms, makes it necessary to have kinds to restrict their behaviour. We cannot use types because otherwise we would mix up terms and type constructors. The presence of kinds makes it, obviously, necessary to also have rules for constructing kinds.

The weakening rules, in Table \ref{tab:weakening-rules}, simply enable us to build up the context from which we construct a kind---first line of the table---derive the kind assignment of a type constructor---second line---and derive the type assignment of a term---third line.   

The conversion rules, Table \ref{tab:type-conv-rules}, enable us to evaluate type constructor applications during the typing phase. This is required since type constructors are meant, not surprisingly, to produce types that can then be assigned to terms. At the same time, type constructors produce types by being applied to arguments and by being evaluated, not unlike terms produce outputs. This generates an apparent contradiction: we need the type constructor to provide us with a type to be used while typing, but the type constructor can only this type during its evaluation, which happens after the typing phase is completed. The adopted solution consists in enabling us to evaluate type constructor applications during the typing phase by the evaluation rules. While we deductively construct our term, we can also construct its type by forming the required type constructors, by applying them and by evaluating them, and we can do all this inside the derivation through which we are constructing our term. Thus we do not need to wait for the dynamic phase of evaluation in order to obtain the type we need to assign to our term.

%
%

Let us now define, in the standard way, the arrow type $\impl$ as a special case of the product type $\all$.

{\it Notation}. If the variable $x$ does not occur in $B$, then we write $\all x:A. B$ as $A\impl B$. We then define $\neg A $ as $A\impl \bot $ and $A_1\et \dots \et  A_n  \impl C$ as $A_1\impl  \dots \impl   A_n  \impl C$. We, moreover, abbreviate by $A^n$ a conjunction $A\wedge \dots \wedge A$ in which  $A$ occurs $n$ times. Finally, by $\langle t_1 , \dots , t_n  \rangle $ we denote the term $\langle t_1, \langle  t_2, \langle t_3 ,  \dots ,  \langle t_
{n-1}, t_n  \rangle\dots \rangle\rangle \rangle$ containing $n-1$ pairs nested on the right, and by $\pi^*_i$ for $i\in \boldn$ the sequence of projections $\pi_0,\pi_1$ that selects the $i$th element of $\langle t_1 , \dots , t_n  \rangle $.


\begin{definition}[Derivations, kinding and typing]
Derivations are inductively defined as follows:
\begin{itemize}
\item $\vdash *:\square$ is a derivation with conclusion $\vdash *:\square$ 
\item if $\delta_1 , \dots , \delta _n $ are derivations with conclusions $S_1 , \dots , S _n $ and a rule application of the form\[\infer{S}{S_1 & \dots & S _n}\] is an instance of one of the rules in Tables \ref{tab:formation-axiom-rules}, \ref{tab:kind-rules}, \ref{tab:weakening-rules}, \ref{tab:type-conv-rules}, 
then
\[\infer{S}{\delta _1 & \dots & \delta  _n}\]is a derivation with conclusion $S$.
\end{itemize}

For any kindless type constructor $\vphi$ and kind $\Phi$, if there exists a derivation with conclusion $\Gamma \vdash \vphi :\Phi$, we say that the type constructor $\vphi $ has kind $\Phi$ and we express it formally by $\vphi : \Phi$.


If $\vphi:*$, then $\vphi$ is a type.

For any typeless term $t$ and type $A :*$, if there exists a derivation with conclusion $\Gamma \vdash t:A$, we say that the term $t$ has type $A$ and we express it formally by $t:A$.
\end{definition}

\section{The evaluation relation and its basic properties}
\label{sec:reductions}

By Definition \ref{def:conv-equiv}, while defining typing derivations, we already defined the evaluation relation for type constructors. Indeed, we need to be able to statically evaluate type constructors if we wish to construct terms and assign them a type which is already in normal form. Nevertheless, we still need to define the evaluation relation that enables us to formalise the actual dynamic process of execution of a term. This relation, unlike the one between type constructors, formalises a probabilistic notion of computation and, therefore, is a ternary relation $t\mapsto ^p t'$ intuitively indicating that the term $t$ evaluates to the term $t'$ with a probability of $p$, where $p\in \boldq$ and $0\leq p\leq 1$.

The reduction rules for evaluating terms are shown in Table \ref{tab:evaluation}. Just as evaluation rules for type constructors, These rules can be applied at any depth inside a term.\footnote{Notice that we use the words {\it reduction} and {\it evaluation} interchangeably.} That is, if a rule of the form $t\mapsto_p t'$ is displayed in Table \ref{tab:evaluation}, then, for any term $s$ that contains at least one occurrence of $x$, $s[t/x] \mapsto _p s[t'/x]$.

\begin{figure*}[t]\centering
\[(\lambda x:A.t)s\mapsto_1 t[s/x]\]
\[\langle t_0,t_1\rangle\pi_i\mapsto_1 t_i \quad \text{for $i\in \{0,1\}$}\]
\[(t\cho{p}s)\nu \mapsto_p t \qquad  (t\cho{p}s)\nu \mapsto_{1-p} s\]

\[\calc[o_0\nu]_1 \dots [o_0\nu]_n\mapsto _{1}\calc[s_1]_1 \dots [s_n]_n \]where $o_0$ is a $0$-ary oracle constant,

and $f_{o_0}(\calc[\;]_1\dots[\;]_n, i )=s_i$ 
for $1\leq i\leq n$

\[\calc[(o_1t_1)\nu]_1\dots[(o_1t_n)\nu]_n \mapsto _{1}\calc[s_1]_1\dots[s_n]_n \]where $o_1$ is a $1$-ary oracle constant,

$\calc[(o_1t_1)\nu]_1\dots[(o_1t_n)\nu]_n=
\cald[o_1]_1\dots[o_1]_n$,

and $f_{o_1}(\cald[\;]_1\dots[\;]_n, i )=s_i$ for $1\leq i\leq n$



\caption{Evaluation reductions}
\label{tab:evaluation}
\end{figure*}

We now show that $\mapsto$ preserves kinds and that $\mapsto^p$ preserves typability.
\begin{theorem}[Subject reduction]\label{thm:subject-reduction}
For any type constructor $\vphi :\Phi$, if $\vphi\mapsto \vphi'$, then $\vphi':\Phi$ and all free term variables of $\vphi '$ occur free also in $\vphi$.

For any term $t : A$, if $t\mapsto_p t'$, then $t':A'$---where $A'$ has the same propositional and quantificational structure as $A$ but might contain different terms---and all free term variables of $t '$ occur free also in $t$.
\end{theorem}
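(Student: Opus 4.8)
The plan is to run the standard subject-reduction argument, establishing three auxiliary lemmas first and isolating the dependent-type bookkeeping in a single replacement lemma. I would begin with the \textbf{substitution lemmas}, proved by induction on the kinding/typing derivation of the substituted object: if $\Gamma, x:A \vdash \vphi : \Phi$ and $\Gamma \vdash t:A$ then $\Gamma \vdash \vphi[t/x]:\Phi[t/x]$, and if $\Gamma, x:A \vdash s:B$ and $\Gamma \vdash t:A$ then $\Gamma \vdash s[t/x]:B[t/x]$. The conversion-rule case uses that $\equiv_{\beta}$ is stable under term substitution, which is immediate from Definition \ref{def:conv-equiv}. I would then prove the \textbf{generation (inversion) lemmas}, reading the premises of the last non-conversion rule off the shape of the subject, always modulo $\equiv_{\beta}$ because the conversion rule of Table \ref{tab:type-conv-rules} may be interleaved arbitrarily; at the kind level no such caveat is needed, since there is no conversion rule for kinds.

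For the type-constructor statement the only redex is $\calt[(\lambda x:\vphi_1.\vphi_2)t] \mapsto \calt[\vphi_2[t/x]]$. Treating the redex first, generation on $(\lambda x:\vphi_1.\vphi_2)t:\Phi$ gives $\Gamma,x:\vphi_1\vdash\vphi_2:\Phi$ and $\Gamma\vdash t:\vphi_1$, where I use the footnote observation that the variable bound by a $\Pi$ never occurs in its body, so that the kind produced by the application rule equals $\Phi$; the substitution lemma then yields $\Gamma\vdash\vphi_2[t/x]:\Phi$. I would lift this through $\calt$ by induction on the context, reapplying the relevant formation/kinding rule at each layer. The free-variable claim is immediate, since $\FV(\vphi_2[t/x])\subseteq(\FV(\vphi_2)\setminus\{x\})\cup\FV(t)$ and all of these occur free in the redex.

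For the term statement I would induct on the derivation of $t\mapsto_p t'$, i.e. on the compatible closure built over the base rules of Table \ref{tab:evaluation}. The observation unifying all cases is that \emph{every base reduction replaces one or more subterms by terms of the same type up to $\equiv_{\beta}$}: beta-reduction sends $(\lambda x:A.t)s:B[s/x]$ to $t[s/x]:B[s/x]$ by the substitution lemma; projection and the two choice rules expose, via generation, a subderivation of the reduct at exactly the required type; and each oracle rule replaces every $o_0\nu$ (resp. $(o_1t_i)\nu$) by the closed term $s_i$ that Definition \ref{def:oracles} guarantees to have exactly type $A$ (resp. $B[t_i/x]$). This last fact also secures the free-variable claim for the oracle cases, the $s_i$ being closed, and for beta, projection and choice the reduct's free variables are visibly a subset of the redex's.

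The remaining work, and the \textbf{main obstacle}, is the congruence step together with the simultaneous multi-hole oracle replacement, which is precisely where $A'$ may differ from $A$. I would isolate it in a replacement lemma: if $\Gamma\vdash\calc[u]:A$ and $u,u'$ have the same type in the local context at the hole modulo $\equiv_{\beta}$, then $\Gamma\vdash\calc[u']:A'$ with $A\sim A'$, where $\sim$ identifies two types that coincide after erasing the term arguments of every constructor application $\vphi t$. The source of the discrepancy is the application rule $\Gamma\vdash ts:B[s/x]$: when the reduced subterm sits as such a dependent argument, the ambient type passes from a type containing $B[s/x]$ to the corresponding one containing $B[s'/x]$, which preserves the $\et$, $\all$, $\chot$ and $\Sigma$ skeleton and the constructor variables but changes the embedded terms. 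I would prove this lemma by induction on the derivation of $\Gamma\vdash\calc[u]:A$, checking that each rule preserves $\sim$, and handle the $n$-hole oracle replacement by iterating it over the holes, which is legitimate because the $s_i$ are closed and the holes are mutually independent. The two genuinely delicate points are (a) making the interplay of generation with the conversion rule fully precise, so that ``has type $A$'' always means ``up to $\equiv_{\beta}$'', and (b) verifying that $\sim$ is exactly the relation preserved when a dependent argument is reduced, which is what makes the weakened conclusion both correct and unavoidable.
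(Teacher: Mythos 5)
Your proposal is correct and has the same skeleton as the paper's proof---a case analysis on the reduction rules, a substitution principle for the redexes, and a replacement principle for propagating the result through the enclosing context---but you supply as explicit lemmas, proved by induction on derivations, what the paper asserts in a single sentence each (``the type of $s[t/x]$ is $B$ too since the substitution replaces $x:A$ by $t:A$ inside $s:B$''; ``replacing a subterm with a type by a subterm with the same type inside a term leaves the type of the term unchanged''). The genuinely different and valuable piece is your replacement lemma with the relation $\sim$. The paper's proof concludes in every case that redex and reduct, and hence the terms obtained by placing them in $\calc[\;]$, ``have the same type''; that claim is stronger than the theorem being proved and is not literally true, because when the reduced subterm occupies a dependent argument position the ambient type passes from one containing $B[s/x]$ to one containing $B[s'/x]$, and $\equiv_{\beta}$ of Definition~\ref{def:conv-equiv} rewrites only type-constructor redexes, so these two types need not be convertible. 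Your $\sim$---agreement after erasing the term arguments of constructor applications---is exactly the invariant that does survive, and it is what actually accounts for the $A'$ in the statement, whose origin the paper's proof never explains. One small correction on your kind case: a term variable $x$ \emph{can} occur free in a kind (inside the type annotating a $\Pi$), so the application rule gives $(\lambda x:\vphi_1.\vphi_2)t:\Phi_2[t/x]$ rather than $\Phi_2$; you do not need the footnote, since your substitution lemma already yields $\vphi_2[t/x]:\Phi_2[t/x]$, so redex and reduct share the kind $\Phi_2[t/x]$ and the conclusion (same kind, no prime) holds as stated. Iterating the one-hole replacement over the $n$ oracle holes is legitimate for the reason you give: the lemma is a statement about typing, not about reduction, so the intermediate terms need not themselves be reducts of anything.
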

\begin{proof}

We start by proving the part of the statement concerning type constructors. The only possible evaluation rule applicable to a type constructor is the following:\[\calt [(\lambda x: \vphi _1.\vphi _2)t]\mapsto \calt[\vphi _2[t/x]]\]

Now, suppose that $\vphi _2:\Phi_2$. Then the kind of $\lambda x: \vphi _1.\vphi _2$ must be $\Pi x: \vphi _1.\Phi _2$. This is clear by simple inspection of the rules in Table \ref{tab:kind-rules} for assigning kinds. As a consequence, for any $t:\vphi_1$, the kind of $(\lambda x: \vphi _1.\vphi _2)t$ is  $\Phi _2$. The kind of $\vphi _2[t/x]$ is  $\Phi _2$ too since the substitution replaces $x:\vphi _1$ by $t:\vphi_1$ inside $\vphi _2 : \Phi _2$. Therefore, $(\lambda x: \vphi _1.\vphi _2)t$ and  $\vphi _2[t/x]$ always have the same kind and---since replacing a subconstructor with a kind by a subconstructor with the same kind inside a constructor leaves the kind of the constructor unchanged---also the constructors $\calt [(\lambda x: \vphi _1.\vphi _2)t] $ and $  \calt[\vphi _2[t/x]]$ have the same kind.

This reduction, moreover, does not free any variable:
\begin{itemize}
\item All variables bound in $\vphi _2$ before the reduction, remain bound after it. 
\item All variables bound by occurrences of $\lambda$ occurring in $\calt[\;]$ before the reduction, remain bound after it.
\item All occurrences of $x$ in $\vphi _2$ are replaced by $t$ during the reduction.
\end{itemize}

We prove now the part of the statement concerning terms. We reason by cases on the reduction rule that has the reduction $t\mapsto _pt'$ as an instance.

\begin{itemize}
\item $\calc[(\lambda x:A.s)t]\mapsto_1 \calc[s[t/x]]$. 

Suppose that $s:B$. Then the type of $\lambda x:A.s$ must be $\all x:A.B$. This is clear by simple inspection of the rules in Table \ref{tab:type-conv-rules} for assigning types. As a consequence, for any $t:A$, the type of $(\lambda x:A.s)t$ is  $B$. The type of $s[t/x]$ is $B$ too since the substitution replaces $x:A$ by $t:A$ inside $s:B$. Therefore, $(\lambda x: A.s)t$ and  $s[t/x]$ always have the same type and---since replacing a subterm with a type by a subterm with the same type inside a term leaves the type of the term unchanged---also the terms $\calc [(\lambda x:A.s)t] $ and $ \calc[s[t/x]]$ have the same type.

This reduction does not free any variable:
\begin{itemize}
\item All variables bound in $s$ before the reduction, remain bound after it.
\item All variables bound by occurrences of $\lambda$ occurring in $\calc[\;]$ before the reduction, remain bound after it.
\item All occurrences of $x$ in $s$ are replaced by $t$ during the reduction.
\end{itemize}

\item $\calc[\langle t_0,t_1\rangle\pi_i]\mapsto_1 \calc[t_i]$. 

Suppose that $t_0:A_0$ and $t_1:A_1$. Then the type of $\langle t_0,t_1\rangle$ must be $A_1 \et A_2$. This is clear by simple inspection of the rules in Table \ref{tab:type-conv-rules} for assigning types. As a consequence, the type of $\langle t_0,t_1\rangle\pi_i$ is the type of $t_i$, that is, $A_i$. Since replacing a subterm with a type by a subterm with the same type inside a term leaves the type of the term unchanged, also the terms $\calc [(\lambda x:A.s)t] $ and $ \calc[s[t/x]]$ have the same type.

This reduction does not free any variable:
\begin{itemize}
\item All variables bound in $t_i$ before the reduction, remain bound after it.
\item All variables bound by occurrences of $\lambda$ occurring in $\calc[\;]$ before the reduction, remain bound after it.\end{itemize}

\item $\calc[(t_0\cho{p}t_1)\nu] \mapsto_q \calc[t_i]$.

Suppose that $t_0:A_0$ and $t_1:A_1$. Then the type of $t_0\cho{p}t_1$ must be $\chot A$ where $A=A_0=A_1$. This is clear by simple inspection of the rules in Table \ref{tab:type-conv-rules} for assigning types. As a consequence, the type of $(t_0\cho{p}t_1)\nu$ is the type of $t_i$, that is, $A_i=A$. Since replacing a subterm with a type by a subterm with the same type inside a term leaves the type of the term unchanged, also the terms $\calc[(t_0\cho{p}t_1)\nu] $ and $ \calc[t_i]$ have the same type.

This reduction does not free any variable:
\begin{itemize}
\item All variables bound in $t_i$ before the reduction, remain bound after it.
\item All variables bound by occurrences of $\lambda$ occurring in $\calc[\;]$ before the reduction, remain bound after it.\end{itemize}

\item $\calc[\calc'[o_0\nu]_1 \dots [o_0\nu]_n]\mapsto _{1}\calc[\calc'[s_1]_1 \dots [s_n]_n] $.

Suppose that the types of $s_1, \dots , s_n$ are $A_1, \dots , A_n$. Then we must have that $A_1= \dots =A_n$ and that the type of     $o_0\nu$ is $A=A_1= \dots =A_n$. The type of $o_0$ must then be $\Sigma A$. This is clear by Definition \ref{def:oracles} and by inspection of the rules in Table \ref{tab:type-conv-rules} for assigning types. As a consequence, the type of $o_0\nu$ and the type of each $s_1, \dots , s_n$ is the same, that is, $A=A_1= \dots =A_n$. Since replacing a subterm with a type by a subterm with the same type inside a term leaves the type of the term unchanged, also the terms $\calc[\calc'[o_0\nu]_1 \dots [o_0\nu]_n]$ and $\calc[\calc'[s_1]_1 \dots [s_n]_n]$ have the same type.

This reduction does not free any variable:
\begin{itemize}
\item All variables occurring in $s_1, \dots , s_n$ are bound according to Definition \ref{def:oracles}.
\item All variables bound by occurrences of $\lambda$ occurring in $\calc[\calc'[\;]_1 \dots [\;]_n]$ before the reduction, remain bound after it.\end{itemize}

\item $ \calc[\calc'[(o_1t_1)\nu]_1\dots[(o_1t_n)\nu]_n] \mapsto _{1}\calc[\calc'[s_1]_1\dots[s_n]_n] $. 

Suppose that the types of $s_1, \dots , s_n$ are $B_1, \dots , B_n$. Then we must have that $B_1= \dots =B_n$ and that the type of     $(o_1t_1)\nu, \dots , (o_1t_n)\nu$ is $B=B_1= \dots =B_n$. The type of $o_1$ must then be of the form $\all x:A.\Sigma B$ and we must have that $t_1:A, \dots , t_n:A$. This is clear by Definition \ref{def:oracles} and by inspection of the rules in Table \ref{tab:type-conv-rules} for assigning types. As a consequence, the type of each $(o_1t_1)\nu , \dots , (o_1t_n)\nu$ and the type of each $s_1, \dots , s_n$ is the same, that is, $B=B_1= \dots =B_n$. Since replacing a subterm with a type by a subterm with the same type inside a term leaves the type of the term unchanged, also the terms $\calc[\calc'[o_0\nu]_1 \dots [o_0\nu]_n]$ and $\calc[\calc'[s_1]_1 \dots [s_n]_n]$ have the same type.

This reduction does not free any variable:
\begin{itemize}
\item All variables occurring in $s_1, \dots , s_n$ are bound according to Definition \ref{def:oracles}.
\item All variables bound by occurrences of $\lambda$ occurring in $\calc[\calc'[\;]_1 \dots [\;]_n]$ before the reduction, remain bound after it.\end{itemize}
\end{itemize}
\end{proof}

We obviously cannot prove that a term only has one normal form. Indeed, a term of the form $s\cho{p}t$, for $s\neq t$, is {\it supposed} to have at least two normal forms: the normal form of $s$ and that of $t$. Hence, we will simply prove that type constructors have a unique normal form. In Section \ref{sec:normalisation}, we will show that each type constructor has a normal form that can be reached in a finite number of evaluation steps. We will actually prove much more than that, but the rest will be discussed in due time. Right now, we prove Theorem \ref{thm:confluence}, which guarantees that all reduced forms of a term can be evaluated to the same normal form. First, though, we need to prove a couple of lemmata that enable us to handle, inside the proof of Theorem \ref{thm:confluence}, the substitutions generated by the reduction of applications.

\begin{lemma}\label{lem:substitution}
For any two variables $x$ and $y$ such that $x\neq y$, term $t$ in which $y$ does not occur free and type constructor or term $\theta$, $(\theta [t/x])[(s[t/x])/y]=(\theta [s/y])[t/x]$.
\end{lemma}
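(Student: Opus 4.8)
The plan is to prove the identity by simultaneous structural induction on $\theta$, treating terms and type constructors together: by Definitions \ref{def:terms} and \ref{def:types} the two grammars are mutually recursive (type annotations sit inside $\lambda$-terms, and terms sit inside applied constructors such as $\varphi t$), and by Definition \ref{def:substitution} both $[t/x]$ and $[s/y]$ are \emph{term} substitutions that act uniformly on whichever of the two sorts $\theta$ belongs to. Hence a single induction establishes both parts of the statement at once. The guiding intuition is that the outer $[t/x]$ on the right-hand side reaches into every copy of $s$ introduced by $[s/y]$, producing exactly the $s[t/x]$ that already appears on the left-hand side; the induction merely makes this precise.

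The base cases are the only places where the two hypotheses are used. If $\theta$ is a constructor variable $\alpha$ or an oracle constant $o$, both sides leave it untouched. If $\theta=x$, the left-hand side is $t[(s[t/x])/y]$, which equals $t$ because $y$ does not occur free in $t$, while the right-hand side is $(x[s/y])[t/x]=x[t/x]=t$ using $x\neq y$; both are $t$. If $\theta=y$, the left-hand side is $y[(s[t/x])/y]=s[t/x]$ (since $x\neq y$ gives $y[t/x]=y$) and the right-hand side is $(s)[t/x]=s[t/x]$; both are $s[t/x]$. For any other variable $z\notin\{x,y\}$ both sides return $z$.

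For the inductive step, every non-binding formation operation commutes with term substitution by Definition \ref{def:substitution}: applications $ts$ and $\varphi t$, the choice $t\cho{p}s$, projections, pairings, the oracle forms, $\chot\varphi$, $\Sigma\varphi$, and the computation terms $[\kappa]$ and $[\kappa]^p$. In each such case the two substitutions distribute over the immediate subterms or subconstructors, and the claim follows directly by applying the induction hypothesis to each component and reassembling. These cases are routine and I would dispatch them uniformly rather than spelling each one out.

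The main obstacle, and the only delicate point, is the binder cases $\lambda z:\varphi.\theta'$ and $\all z:\varphi.\theta'$ (together with the constructor abstraction), where variable capture must be avoided. Here I would invoke the standard convention that bound variables are renamed so that $z\notin\{x,y\}$ and $z$ does not occur free in either $s$ or $t$; this is precisely what lets the hypothesis ``$y$ not free in $t$'' propagate correctly into the body. Under this convention both substitutions pass through the binder without capturing, acting independently on the annotation $\varphi$ and the body $\theta'$, so applying the induction hypothesis to $\varphi$ and to $\theta'$ and then restoring the binder yields the equality. Justifying that this renaming is harmless --- i.e. that the identity is stable under $\alpha$-renaming of $z$ --- is the one step that requires genuine care rather than mechanical rewriting, and it is where I would concentrate the argument.
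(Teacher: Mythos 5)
Your proposal is correct and follows essentially the same route as the paper: structural induction on $\theta$ (the paper phrases it as induction on the number of symbol occurrences), with the two hypotheses $x\neq y$ and ``$y$ not free in $t$'' used exactly where you use them --- in the variable cases $\theta=x$ and $\theta=y$ --- and the binder cases handled by renaming the bound variable away from $x$ and $y$. The only difference is presentational: the paper spells out each non-binding case separately where you dispatch them uniformly, and you are somewhat more explicit than the paper about what the renaming convention must guarantee and about the need for stability under $\alpha$-renaming.
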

\begin{proof}
The proof is by induction on the number of symbol occurrences in $\theta$. We reason by cases on the structure of $\theta$.
\begin{itemize}
\item $\theta = \alpha$. Since $(\alpha [t/x])[(s[t/x])/y]=\alpha [(s[t/x])/y]=\alpha=\alpha[t/x]=(\alpha [s/y])[t/x]$ we have that the statement holds.

\item $\theta = \all z:A. \psi$. Considering that we can always rename bound variables, we suppose that $x\neq z\neq y$. Then{\footnotesize \[((\all z:A. \psi) [t/x])[(s[t/x])/y]= (\all z:(A[t/x]). \psi[t/x]) [(s[t/x])/y] = \all z:((A[t/x])[(s[t/x])/y]). (\psi[t/x]) [(s[t/x])/y]\]}By inductive hypothesis, $(\psi[t/x]) [(s[t/x])/y] = (\psi [s/y])[t/x]$ and $(A[t/x]) [(s[t/x])/y] = (A [s/y])[t/x]$, which implies that $\all z:((A[t/x]) [(s[t/x])/y]). (\psi[t/x]) [(s[t/x])/y]= \all z:((A [s/y])[t/x]). (\psi [s/y])[t/x]$. From this and the following equalities:\[\all z:((A [s/y])[t/x]). (\psi [s/y])[t/x]= (\all z: (A [s/y]). (\psi [s/y]))[t/x]=((\all z:A. \psi )[s/y])[t/x]\]---also due to the fact that $x\neq z\neq y$---we can conclude that the statement holds in this case as well. 

\item $\theta = \chot \psi$. We have that\[((\chot \psi) [t/x])[(s[t/x])/y]= (\chot \psi[t/x]) [(s[t/x])/y] = \chot (\psi[t/x]) [(s[t/x])/y]\]By inductive hypothesis, $(\psi[t/x]) [(s[t/x])/y] = (\psi [s/y])[t/x]$, which implies that $\chot (\psi[t/x]) [(s[t/x])/y]= \chot (\psi [s/y])[t/x]$. From this and the fact that $\chot (\psi [s/y])[t/x]= (\chot (\psi [s/y]))[t/x]=((\chot \psi )[s/y])[t/x]$ we can conclude that that the statement holds in this case as well.

\item $\theta = \psi_1 \wedge \psi_2$. We have that\[(( \psi_1 \wedge \psi_2) [t/x])[(s[t/x])/y]= ( \psi_1[t/x] \wedge \psi_2[t/x]) [(s[t/x])/y] \]\[ =(\psi_1[t/x])[(s[t/x])/y] \wedge (\psi_2[t/x])[(s[t/x])/y]\]By inductive hypothesis, $(\psi_i[t/x]) [(s[t/x])/y] = (\psi_i [s/y])[t/x]$, which implies that\[(\psi_1[t/x])[(s[t/x])/y] \wedge (\psi_2[t/x])[(s[t/x])/y] = (\psi_1 [s/y])[t/x] \wedge (\psi_2 [s/y])[t/x]\]From this and the fact that\[(\psi_1 [s/y])[t/x] \wedge (\psi_2 [s/y])[t/x]=(\psi_1 [s/y] \wedge \psi_2 [s/y])[t/x]= ((\psi_1 \wedge \psi_2)[s/y])[t/x]\]we can conclude that that the statement holds in this case as well.

\item $\theta = \lambda z:A . \psi$. Considering that we can always rename bound variables, we suppose that $x\neq z\neq y$. Then{\footnotesize \[((\lambda z:A. \psi) [t/x])[(s[t/x])/y]= (\lambda z:(A[t/x]). \psi[t/x]) [(s[t/x])/y] = \lambda z:((A[t/x]) [(s[t/x])/y]). (\psi[t/x]) [(s[t/x])/y]\]}By inductive hypothesis, $(\psi[t/x]) [(s[t/x])/y] = (\psi [s/y])[t/x]$ and $(A[t/x]) [(s[t/x])/y] = (A [s/y])[t/x]$, which implies that $\lambda z:((A[t/x]) [(s[t/x])/y]). (\psi[t/x]) [(s[t/x])/y]= (\lambda z: (A[s/y]). (\psi [s/y]))[t/x]$. From this and the following equalities: $\lambda z:((A[s/y])[t/x]). (\psi [s/y])[t/x]= (\lambda z:(A[s/y]). (\psi [s/y]))[t/x]=((\lambda z:A. \psi )[s/y])[t/x]$---also due to the fact that $x\neq z\neq y$---we can conclude that the statement holds in this case as well.

\item $\theta = \psi u$. We have that{\footnotesize \[(( \psi u) [t/x])[(s[t/x])/y]= ( \psi[t/x] u[t/x]) [(s[t/x])/y] = (\psi[t/x])[(s[t/x])/y] (u[t/x])[(s[t/x])/y]\]}By inductive hypothesis, $(\psi[t/x])[(s[t/x])/y]  = (\psi [s/y])[t/x]$ and $(u[t/x])[(s[t/x])/y]=(u [s/y])[t/x]$, which implies that\[(\psi [t/x])[(s[t/x])/y] ( u[t/x])[(s[t/x])/y] = (\psi  [s/y])[t/x] ( u [s/y])[t/x]\]From this and the fact that\[(\psi  [s/y])[t/x]  ( u [s/y])[t/x]=(\psi  [s/y]   u [s/y])[t/x]= ((\psi  u)[s/y])[t/x]\]we can conclude that that the statement holds in this case as well.

\item $\theta =z$. We have three subcases: $x\neq z\neq y$, $z=x$ and $z=y$. 

If $x\neq z\neq y$ then $(z [t/x])[(s[t/x])/y]=z =(z [s/y])[t/x]$. 

If $\theta =z=x$ then $(x [t/x])[(s[t/x])/y]= t [(s[t/x])/y]=t$ since, by assumption, $y$ does not occur free in $t$. But we also have that    $(x [s/y])[t/x]=x [t/x]$ since $x\neq y$ and, finally, $x [t/x]=t$. Hence, indeed, $(x [t/x])[(s[t/x])/y]=(x [s/y])[t/x]$  as desired since $x=\theta$.

If $\theta =z=y$ then $(y [t/x])[(s[t/x])/y]= y [(s[t/x])/y]=s[t/x]$ since, by assumption, $y\neq x$. But we also have that    $(y [s/y])[t/x]=s [t/x]$. Hence, indeed, $(y [t/x])[(s[t/x])/y]=(y [s/y])[t/x]$ as desired since $y=\theta$.

\item $\theta =\lan u_1,u_2\ran$. We have that\[(\lan u_1,u_2\ran [t/x])[(s[t/x])/y]= \lan u_1[t/x] , u_2 [t/x]\ran [(s[t/x])/y] \]\[ =\lan (u_1[t/x])[(s[t/x])/y] , (u_2[t/x])[(s[t/x])/y]\ran \]By inductive hypothesis, $(u_i[t/x]) [(s[t/x])/y] = (u_i [s/y])[t/x]$, which implies that\[\lan (u_1[t/x])[(s[t/x])/y] , (u_2[t/x])[(s[t/x])/y]\ran  = \lan (u_1 [s/y])[t/x] , (u_2 [s/y])[t/x]\ran \]From this and the fact that\[\lan (u_1 [s/y])[t/x] , (u_2 [s/y])[t/x]\ran =\lan u_1 [s/y] , u_2 [s/y]\ran [t/x]= (\lan u_1  , u_2\ran[s/y])[t/x]\]we can conclude that that the statement holds in this case as well.

\item $\theta =u\pi_0$ and $t=u\pi_1$. We have that{\footnotesize \[((  u\pi_i) [t/x])[(s[t/x])/y]= (  (u [t/x]) \pi_i) [(s[t/x])/y] = (( u [t/x])[(s[t/x])/y] )\pi_i \]}By inductive hypothesis, $( u [t/x])[(s[t/x])/y]  = ( u  [s/y])[t/x]$, which implies that\[(( u  [t/x])[(s[t/x])/y]) \pi_i  = (( u   [s/y])[t/x]) \pi_i \]From this and the fact that\[(( u   [s/y])[t/x])  \pi_i =( (u   [s/y]   )\pi_i)[t/x]= ((u  \pi_i)[s/y])[t/x]\]we can conclude that that the statement holds in this case as well.

\item $\theta =\lambda z:A. u$. Considering that we can always rename bound variables, we suppose that $x\neq z\neq y$. Then{\footnotesize \[((\lambda z:A.  u ) [t/x])[(s[t/x])/y]= (\lambda z:(A[t/x]).  u [t/x]) [(s[t/x])/y] = \lambda z:((A[t/x]) [(s[t/x])/y]). ( u [t/x]) [(s[t/x])/y]\]}By inductive hypothesis, $( u [t/x]) [(s[t/x])/y] = ( u  [s/y])[t/x]$ and $(A[t/x]) [(s[t/x])/y] = (A [s/y])[t/x]$, which implies that $\lambda z:((A[t/x]) [(s[t/x])/y]). ( u [t/x]) [(s[t/x])/y]= (\lambda z: (A[s/y]). ( u  [s/y]))[t/x]$. From this and the following equalities: $\lambda z:((A[s/y])[t/x]). ( u  [s/y])[t/x]= (\lambda z:(A[s/y]). ( u  [s/y]))[t/x]=((\lambda z:A.  u  )[s/y])[t/x]$---also due to the fact that $x\neq z\neq y$---we can conclude that the statement holds in this case as well.

\item $\theta =vu$. We have that{\footnotesize \[((  v  u) [t/x])[(s[t/x])/y]= (  v [t/x] u[t/x]) [(s[t/x])/y] = ( v [t/x])[(s[t/x])/y] (u[t/x])[(s[t/x])/y]\]}By inductive hypothesis, $( v [t/x])[(s[t/x])/y]  = ( v  [s/y])[t/x]$ and $(u[t/x])[(s[t/x])/y]=(u [s/y])[t/x]$, which implies that\[( v  [t/x])[(s[t/x])/y] ( u[t/x])[(s[t/x])/y] = ( v   [s/y])[t/x] ( u [s/y])[t/x]\]From this and the fact that\[( v   [s/y])[t/x]  ( u [s/y])[t/x]=( v   [s/y]   u [s/y])[t/x]= ((v  u)[s/y])[t/x]\]we can conclude that that the statement holds in this case as well. 

\item $\theta = u_1\cho{p}u_2$.
We have that\[( (u_1\cho{p}u_2) [t/x])[(s[t/x])/y]= (( u_1[t/x]) \cho{p} (u_2 [t/x])) [(s[t/x])/y] \]\[ = \cho{p} (u_2[t/x])[(s[t/x])/y]\]By inductive hypothesis, $(u_i[t/x]) [(s[t/x])/y] = (u_i [s/y])[t/x]$, which implies that\[ ((u_1[t/x])[(s[t/x])/y] )\cho{p} ((u_2[t/x])[(s[t/x])/y] ) =  ((u_1 [s/y])[t/x] )\cho{p} ((u_2 [s/y])[t/x]) \]From this and the fact that\[ ((u_1 [s/y])[t/x]) \cho{p} ((u_2 [s/y])[t/x]) =( (u_1 [s/y]) \cho{p} (u_2 [s/y])) [t/x]= (( u_1  \cho{p} u_2)[s/y])[t/x]\]we can conclude that that the statement holds in this case as well. 

\item $\theta =u\nu$. We have that{\footnotesize \[((  u\nu) [t/x])[(s[t/x])/y]= (  (u [t/x]) \nu) [(s[t/x])/y] = (( u [t/x])[(s[t/x])/y] )\nu \]}By inductive hypothesis, $( u [t/x])[(s[t/x])/y]  = ( u  [s/y])[t/x]$, which implies that\[(( u  [t/x])[(s[t/x])/y]) \nu  = (( u   [s/y])[t/x]) \nu \]From this and the fact that\[(( u   [s/y])[t/x])  \nu =( (u   [s/y]   )\nu)[t/x]= ((u  \nu)[s/y])[t/x]\]we can conclude that that the statement holds in this case as well.

\end{itemize}

\end{proof}

\begin{lemma}\label{lem:reduction-substitution}
For any type constructor $\chi $ and term $t$ that does not contain free variables which are bound in $\chi$, if $\chi \mapsto \chi '$ then $\chi[t/x]\mapsto \chi'[t/x]$.
\end{lemma}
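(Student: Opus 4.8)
The plan is to reduce this lemma to Lemma \ref{lem:substitution} by peeling off the context in which the contracted redex sits. By Definition \ref{def:conv-equiv}, the single step $\chi \mapsto \chi'$ means that there is a type constructor context $\calt[\;]$, type constructors $\vphi_1,\vphi_2$, and a term $s$ such that $\chi = \calt[(\lambda y:\vphi_1.\vphi_2)s]$ and $\chi' = \calt[\vphi_2[s/y]]$. Renaming the bound variable if necessary, I may assume $y \neq x$ and $y \notin \FV(t)$; the latter is licensed by the hypothesis that $t$ has no free variable that is bound in $\chi$, since $y$ is bound by the displayed $\lambda$. I would then prove, by induction on the structure of $\calt[\;]$, the statement that $(\calt[(\lambda y:\vphi_1.\vphi_2)s])[t/x] \mapsto (\calt[\vphi_2[s/y]])[t/x]$, which is exactly the claim.

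The base case $\calt[\;] = [\;]$ is where the real work lies. Here I must show $((\lambda y:\vphi_1.\vphi_2)s)[t/x] \mapsto (\vphi_2[s/y])[t/x]$. Pushing the substitution inside the abstraction, and using $y \neq x$ together with $y \notin \FV(t)$ so that no capture occurs, the left-hand side equals $(\lambda y:(\vphi_1[t/x]).(\vphi_2[t/x]))(s[t/x])$. This is again a redex, so by Definition \ref{def:conv-equiv} it reduces in one step to $(\vphi_2[t/x])[(s[t/x])/y]$. Now Lemma \ref{lem:substitution}, applied with $\theta = \vphi_2$ (its side conditions $x \neq y$ and $y \notin \FV(t)$ being exactly the ones just secured), gives $(\vphi_2[t/x])[(s[t/x])/y] = (\vphi_2[s/y])[t/x]$, which is $\chi'[t/x]$. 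This closes the base case.

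For the inductive step I would case on the outermost constructor of $\chi$, namely $\lambda$, $\all$, an application $\vphi u$, $\chot$, or $\Sigma$. In each of these cases the contracted redex lies entirely within one proper subconstructor; the substitution $[t/x]$ distributes over the outermost constructor (renaming the bound variable when that constructor is $\lambda$ or $\all$, which is harmless because $t$ has no free variable bound in $\chi$), leaving the redex inside the substituted subconstructor. The induction hypothesis applied to that subconstructor yields the reduction there, and, since the rules of Definition \ref{def:conv-equiv} act at any depth, the reduction propagates back up through the outermost constructor, giving $\chi[t/x] \mapsto \chi'[t/x]$. The main obstacle is not the combinatorics of these cases, which are routine, but honouring the side conditions of Lemma \ref{lem:substitution} in the base case and ensuring that no capture arises as $[t/x]$ is pushed under the binders of $\chi$. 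Both are controlled by the single hypothesis that $t$ shares no free variable with the bound variables of $\chi$: it supplies $y \neq x$ and $y \notin \FV(t)$ for the contracted abstraction, and it is inherited by every subconstructor (whose bound variables are among those of $\chi$), so the induction is well-founded.
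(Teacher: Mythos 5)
Your proof is correct and follows essentially the same route as the paper's: the only substantive case is the contraction of the redex at the root, which both you and the paper resolve via the substitution-commutation identity of Lemma \ref{lem:substitution}, with the hypothesis on $t$ supplying exactly the side conditions $x\neq y$ and $y\notin\FV(t)$. Your induction is phrased on the context surrounding the fixed redex rather than on the number of symbols of $\chi$ itself, but this is a cosmetic reorganization of the same argument.
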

\begin{proof}
We reason by induction on the number of symbol occurrences in  $\chi$.

In the base case, $\chi=\alpha $. Since $\alpha$ cannot be reduced to any type constructor, the statement is trivially satisfied.


Suppose now that all type constructors with less than $n$ symbols verify the statement, we show that also a type constructor $\chi$ with $n$ symbols does.  We reason by cases on the structure of $\chi$.
\begin{itemize}
\item $\chi = \all y:A . \psi$. First, if $x$ is bound in $\chi$ then $\chi[t/x]=\chi$, $\chi'[t/x]=\chi'$ and the statement is tautologous. We hence suppose that $x\neq y$. Now, $\all y:A .  \psi\mapsto \all y:A .  \psi'$ if, and only if, $ \psi\mapsto  \psi'$ and $\all y:A .  \psi$ cannot be reduced in any other way. By inductive hypothesis, if $\psi \mapsto \psi '$ then $\psi[t/x]\mapsto \psi '[t/x]$. Hence, we have that $\psi[t/x]\mapsto \psi '[t/x]$  enables us to obtain $\all y:(A[t/x]) . (\psi [t/x]) \mapsto \all y:(A[t/x]) . (\psi'[t/x])$. Now, since  $x\neq y$ and since $t$ does not contain free variables which are bound in $\chi$, we have that $\all y:(A[t/x]) . (\psi [t/x])=(\all y:A .  \psi )[t/x]$ and $\all y:(A[t/x]) . (\psi '[t/x])=(\all y:A .  \psi ')[t/x]$ and thus that the statement holds for $\chi = \all y:A . \psi$.

\item $\chi = \chot \psi$. Now, $\chot \psi\mapsto \chot \psi'$ if, and only if, $ \psi\mapsto  \psi'$ and $\chot \psi$ cannot be reduced in any other way. By inductive hypothesis, if $\psi \mapsto \psi '$ then $\psi[t/x]\mapsto \psi '[t/x]$. But since $\psi[t/x]\mapsto \psi '[t/x]$  implies that $\chot(\psi [t/x]) \mapsto \chot(\psi'[t/x])$ and since $\chot(\psi [t/x])=(\chot \psi )[t/x]$ and $\chot(\psi '[t/x])=(\chot \psi ')[t/x]$, we have that the statement holds for $\chi = \chot \psi$.

\item $\chi = \psi_1 \wedge \psi_2$. Now, $ \psi_1 \wedge \psi_2$ can only reduce in one step to terms of the form $ \psi_1  ' \wedge \psi_2 $ or $ \psi_1 \wedge  \psi_2'$   such that $\psi_i\mapsto  \psi_i'$ for $i\in \{1,2\}$. Moreover, $\psi_i\mapsto  \psi_i'$ implies that $ \psi_1 \wedge \psi_2 \mapsto   \psi_1' \wedge \psi_2 $ and that  $ \psi_1 \wedge \psi_2 \mapsto   \psi_1 \wedge \psi_2 '$. Moreover, by inductive hypothesis, if $\psi_i\mapsto  \psi_i'$ then $\psi_i [t/x]\mapsto  \psi_i'[t/x]$. Therefore, if $\chi\mapsto \chi'$, then 
\begin{itemize}
\item either $\chi'=   \psi_1' \wedge \psi_2 $ and $\chi[t/x]= (\psi_1 \wedge \psi_2)[t/x]=\psi_1[t/x] \wedge \psi_2[t/x] \mapsto  \psi_1'[t/x] \wedge \psi_2[t/x]= (\psi_1' \wedge \psi_2)[t/x]=\chi'[t/x]$,

\item or $\chi'=   \psi_1 \wedge \psi_2'$ and $\chi[t/x]= 
 (\psi_1 \wedge \psi_2)[t/x]=\psi_1[t/x] \wedge \psi_2[t/x] \mapsto  \psi_1[t/x] \wedge \psi_2'[t/x]= 
 (\psi_1 \wedge \psi_2')[t/x]=\chi'[t/x]$
\end{itemize}In both cases, the statement holds for $\chi = \psi_1 \wedge \psi_2$.

\item $\chi = \lambda y:A . \psi$. First, if $x$ is bound in $\chi$ then $\chi[t/x]=\chi$, $\chi'[t/x]=\chi'$ and the statement is tautologous. Hence, we suppose that $x$ is free in $\chi$. Now, it is clear that $\lambda y:A . \psi\mapsto \lambda y:A . \psi'$ if, and only if, $ \psi\mapsto  \psi'$ and that $\lambda y:A . \psi$ cannot be reduced in any other way. Therefore we know that $\chi\mapsto \chi $ implies that $ \psi\mapsto  \psi'$ By inductive hypothesis, moreover,  $\psi \mapsto \psi '$ implies that $\psi[t/x]\mapsto \psi '[t/x]$. Since we assumed that $x$ is free in $\chi$ we have, as a consequence, that $\lambda y:(A[t/x]) .(\psi [t/x])=(\lambda y:A . \psi )[t/x]$ and $\lambda y:(A[t/x]) .(\psi '[t/x])=(\lambda y:A . \psi ')[t/x]$. But  then from $\psi[t/x]\mapsto \psi '[t/x]$---again from the fact that $\lambda y:A . \theta\mapsto \lambda y:A . \theta'$ if, and only if, $ \theta\mapsto  \theta'$---we can obtain $(\lambda y:A . \psi )[t/x]=\lambda y:(A[t/x]) .(\psi [t/x]) \mapsto \lambda y:(A[t/x]) .(\psi'[t/x])=(\lambda y:A . \psi ')[t/x]$, which proves that the statement holds in this case as well.

\item $\chi = \psi s$. Now, either $\psi s$ reduces to $\psi' s$---in which case we can prove the statement by induction just like in the first case considered---or we have that $\chi = \psi s=(\lambda y:A.\theta)s\mapsto\theta [s/y] $. In this case we have to prove that $((\lambda y:A.\theta)s)[t/x]\mapsto (\theta [s/y])[t/x] $. As in the previous case, we suppose that $x$ is free in $\chi$ because otherwise $\chi[t/x]=\chi$, $\chi'[t/x]=\chi'$ and the statement is tautologous. But if $x$ is free in $\chi = \psi s=(\lambda y:A.\theta)s  $ then $x\neq y$. Thus, $ ((\lambda y:A.\theta)s)[t/x]=  (\lambda y:A.(\theta [t/x]) )(s[t/x])\mapsto (\theta [t/x])[(s[t/x])/y]$. Since---with the help of the assumption that $y$ does not occur in $t$ and of the fact that $x\neq y$---Lemma \ref{lem:substitution} guarantees us that $ (\theta [t/x])[(s[t/x])/y]=(\theta [s/y])[t/x] $, we have that the statement holds also in this case. 
\end{itemize}
\end{proof}

\begin{theorem}[Confluence for type constructors]\label{thm:confluence}
For any $\vphi:\Phi$, if $\vphi \mapsto^* \vphi_1$ and $\vphi \mapsto^* \vphi_2$, then there exists a type constructor $\vphi_3$ for which $\vphi _1 \mapsto^* \vphi_3$ and $\vphi_2 \mapsto^* \vphi_3$.
\end{theorem}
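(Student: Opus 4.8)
The plan is to prove the Church--Rosser property for $\mapsto$ by the Tait--Martin-L\"of method of parallel reduction, in the streamlined form due to Takahashi. The situation is milder than for full $\beta$-reduction because the only rewrite rule on type constructors is the $\beta$-contraction $\calt[(\lambda x:\vphi_1.\vphi_2)t]\mapsto\calt[\vphi_2[t/x]]$, and its argument $t$ is a \emph{term}, which is inert under $\mapsto$: terms are never rewritten at the level of type constructors. Hence I will only need a \emph{one-sided} substitution lemma (the substituted term stays fixed), rather than the two-sided version needed for ordinary $\lambda$-calculus, and the whole argument reduces to proving confluence of a single congruence-plus-contraction relation.

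First I would introduce a parallel reduction relation $\rhd$ on type constructors, defined inductively by: $\alpha\rhd\alpha$; congruence clauses propagating $\rhd$ through $\lambda x:\vphi_1.\vphi_2$, $\all x:\vphi_1.\vphi_2$, $\vphi t$, $\chot\vphi$ and $\Sigma\vphi$ (leaving the term $t$ unchanged); and the contraction clause declaring $(\lambda x:\vphi_1.\vphi_2)t\rhd\vphi_2'[t/x]$ whenever $\vphi_2\rhd\vphi_2'$. A routine induction then yields the sandwich $\mapsto\ \subseteq\ \rhd\ \subseteq\ \mapsto^{*}$, where the right inclusion uses that $\mapsto$ may be applied at any depth. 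Taking reflexive--transitive closures gives $\rhd^{*}=\mapsto^{*}$, so it suffices to establish confluence of $\rhd$.

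The technical core is a substitution lemma for parallel reduction: if $\vphi\rhd\vphi'$ then $\vphi[t/x]\rhd\vphi'[t/x]$, proved by induction on the derivation of $\vphi\rhd\vphi'$. Its only delicate case is the contraction clause, where a redex $(\lambda y:\psi_1.\psi_2)s$ nested inside $\vphi$ is contracted while the outer substitution $[t/x]$ is performed; reconciling the two orders of substitution is exactly the identity $(\psi_2[t/x])[(s[t/x])/y]=(\psi_2[s/y])[t/x]$ furnished by Lemma \ref{lem:substitution}, after renaming bound variables so that $x\neq y$ and $y$ is not free in $t$. With this lemma available, I would prove the diamond property of $\rhd$ via Takahashi's complete development: define $\vphi^{*}$ by contracting simultaneously all redexes present in $\vphi$, and show by induction that $\vphi\rhd\vphi'$ implies $\vphi'\rhd\vphi^{*}$. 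The contraction case of this triangle property is precisely where the substitution lemma is invoked (to pass from $\vphi_2'\rhd\vphi_2^{*}$ to $\vphi_2'[t/x]\rhd\vphi_2^{*}[t/x]$), and the congruence case reuses the contraction clause of $\rhd$. Given $\vphi\rhd\vphi_1$ and $\vphi\rhd\vphi_2$, choosing $\vphi_3=\vphi^{*}$ then closes the diamond.

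Finally, the diamond property of $\rhd$ propagates to its transitive closure by the standard tiling argument, so $\rhd^{*}$ is confluent; since $\rhd^{*}=\mapsto^{*}$, this is exactly the statement of Theorem \ref{thm:confluence}. I expect the main obstacle to be this contraction case of the triangle/substitution lemma, i.e.\ the interaction between $\beta$-contraction and substitution: it is the one point where the full strength of Lemma \ref{lem:substitution} and careful bound-variable renaming are required. Lemma \ref{lem:reduction-substitution} can itself be read as the single-step specialisation of the substitution lemma (namely that $\vphi\mapsto\vphi'$ implies $\vphi[t/x]\mapsto\vphi'[t/x]$) and can be reused directly to check the inclusion $\rhd\ \subseteq\ \mapsto^{*}$ and the context-compatibility of $\rhd$.
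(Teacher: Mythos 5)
Your proposal is correct but follows a genuinely different route from the paper. The paper does not introduce parallel reduction at all: it proves, by structural induction on $\vphi$, that the single-step relation $\mapsto$ itself already satisfies a one-step diamond property, and then closes with a standard diagram chase. The only non-trivial overlap it must handle is $(\lambda x:A.\chi)t$ reduced both by head contraction to $\chi[t/x]$ and inside the body to $(\lambda x:A.\chi')t$; the paper joins these at $\chi'[t/x]$ via Lemma \ref{lem:reduction-substitution}, so Lemma \ref{lem:substitution} enters only indirectly, through that lemma. This elementary argument is available precisely because of the observation you make at the outset: the argument $t$ is a term, hence inert under $\mapsto$ and incapable of containing or duplicating type-constructor redexes, so $\chi\mapsto\chi'$ lifts to a \emph{single} step $\chi[t/x]\mapsto\chi'[t/x]$ rather than to a multi-step reduction, and no parallelisation is needed to close the critical pair. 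Your Tait--Martin-L\"of/Takahashi development is therefore heavier machinery than strictly necessary here, but it is sound, it invokes Lemma \ref{lem:substitution} at exactly the same critical point (reconciling the two orders of substitution in the contraction case of the substitution/triangle lemma), and it is more robust: it would survive a later extension in which terms are reduced inside type constructors or in which substitution can duplicate redexes, situations where the paper's one-step diamond would fail. One small point in your favour: since $\rhd$ is reflexive, your diamond is literally true, whereas the paper's one-step diamond as stated needs a reflexive closure to handle the degenerate case in which both reductions contract the same redex (e.g.\ $(\lambda x:A.\alpha)t\mapsto\alpha$ on both sides, with $\alpha$ normal and hence irreducible).
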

\begin{proof}
We are going to prove the following statement:

\begin{quote}If $\vphi \mapsto \vphi_1$ and $\vphi \mapsto \vphi_2$, then we can always reduce $\vphi_1$ and $\vphi_2$ in such a way to show that $\vphi _1 \mapsto  \vphi_3$ and $\vphi_2 \mapsto  \vphi_3$.
\end{quote}

Then a simple argument by diagram chase---see, for instance, \cite[Chap. 1, Sec. 1.4]{sor06}---yields the statement.

Let us reason by induction on the number of symbol occurrences in $\vphi$ and by cases on the reductions $\vphi \mapsto \vphi_1$ and $\vphi \mapsto \vphi_2$.

In the base case, $\varphi=\alpha $, which is impossible since $\alpha$ cannot be reduced.

Suppose now that all type constructors with less than $n$ symbols verify the statement, we show that also a type constructor $\vphi$ with $n$ symbols does.  

\begin{itemize}

\item $\vphi = \all x:A. \psi$. 

Now, clearly, $\all x:A. \psi$ can only reduce to terms of the form $\all x:A. \psi \mapsto \all x:A. \psi'$ such that $\psi\mapsto  \psi'$ and, moreover, $\psi\mapsto  \psi'$  implies that $\all x:A. \psi \mapsto \all x:A. \psi'$. Therefore, any two reductions $\all x:A. \psi=\vphi \mapsto \vphi_1 $ and $\all x:A. \psi=\vphi \mapsto \vphi_2$ can be represented as follows: 
\[\vphi=\all x:A. \psi\mapsto \all x:A. \psi _1= \vphi_1 \qquad \qquad \vphi =\all x:A. \psi\mapsto \all x:A. \psi_2= \vphi_2\]But this means that 
\[ \psi\mapsto   \psi _1  \qquad  \qquad   \psi\mapsto \psi_2\]
which, by inductive hypothesis, implies that there exists a $\psi_3$ for which
\[\psi _1 \mapsto  \psi _3 \qquad \qquad  \psi _2 \mapsto  \psi_3\]This, in turn, implies that
\[\chot\psi _1 \mapsto  \chot\psi _3 \qquad \qquad  \chot\psi _2 \mapsto  \chot\psi_3\]which gives us what we were looking for if we take $\chot\psi_3 $ as our $\vphi_3$. Hence,  $\vphi = \all x:A. \psi$ verifies the statement.

\item $\vphi = \chot \psi$. 

Now, clearly, $\chot \psi$ can only reduce to terms of the form $\chot \psi \mapsto \chot \psi'$ such that $\psi\mapsto  \psi'$ and, moreover, $\psi\mapsto  \psi'$  implies that $\chot \psi \mapsto \chot \psi'$. Therefore, any two reductions $\chot \psi=\vphi \mapsto \vphi_1 $ and $\chot \psi=\vphi \mapsto \vphi_2$ can be represented as follows: 
\[\vphi=\chot \psi\mapsto \chot \psi _1= \vphi_1 \qquad \qquad \vphi =\chot \psi\mapsto \chot \psi_2= \vphi_2\]But this means that 
\[ \psi\mapsto   \psi _1  \qquad  \qquad   \psi\mapsto \psi_2\]
which, by inductive hypothesis, implies that there exists a $\psi_3$ for which
\[\psi _1 \mapsto  \psi _3 \qquad \qquad  \psi _2 \mapsto  \psi_3\]This, in turn, implies that
\[\chot\psi _1 \mapsto  \chot\psi _3 \qquad \qquad  \chot\psi _2 \mapsto  \chot\psi_3\]which gives us what we were looking for if we take $\chot\psi_3 $ as our $\vphi_3$. Hence,  $\vphi = \chot \psi$ verifies the statement.

\item $\vphi = \psi \wedge \chi$. Now,  $\psi \wedge \chi$ can only reduce in one step to terms of the form $\psi ' \wedge \chi $ or $\psi \wedge \chi '$   such that $\psi\mapsto  \psi'$ and, respectively, $\chi\mapsto  \chi'$. Moreover, $\psi\mapsto  \psi'$  and $\chi\mapsto  \chi'$ imply that $\psi \wedge \chi \mapsto  \psi ' \wedge \chi $ and $\psi \wedge \chi \mapsto  \psi \wedge \chi '$. By inductive hypothesis,  $\psi$ and $\chi$ verify the statement. Therefore, if both reductions of $\vphi = \psi \wedge \chi$ act exclusively inside $\psi$ or exclusively inside $\chi$, the statement clearly holds for $\vphi = \psi\et \chi $ as well. Suppose then that $\psi \wedge \chi= \vphi \mapsto \vphi_1 = \psi ' \wedge \chi$ and $\psi \wedge \chi= \vphi \mapsto \vphi_2= \psi \wedge \chi '$. Since then we must have that $\chi\mapsto  \chi'$ and $\psi\mapsto  \psi'$, we also have that $ \psi ' \wedge \chi \mapsto  \psi ' \wedge \chi '$ and  $ \psi \wedge \chi ' \mapsto  \psi ' \wedge \chi '$. Taking $\psi ' \wedge \chi '$ as $\vphi _3$ gives us that the statement is verified by $\vphi $ as well.

\item $\vphi = \lambda x:A . \psi$. By inductive hypothesis, if $\psi \mapsto \psi _1 $ and $\psi \mapsto \psi _2 $, then there exists a $\psi_3$ for which $\psi _1 \mapsto  \psi _3$ and $\psi _2 \mapsto  \psi_3$.  Since $\lambda x:A .  \psi \mapsto \lambda x:A .  \psi'$ if, and only if, $\psi\mapsto  \psi'$ and since $\lambda x:A .  \psi$ cannot reduce in any other way, we have, by an argument analogous to the one used in the first case considered, that also $\vphi = \lambda x:A . \psi$ verifies the statement.

\item $\vphi = \psi t$. By inductive hypothesis, if $\psi \mapsto \psi _1 $ and $\psi \mapsto \psi _2 $, then there exists a $\psi_3$ for which $\psi _1 \mapsto  \psi _3$ and $\psi _2 \mapsto  \psi_3$.  Therefore, if both reductions of $\psi t $ exclusively act inside $\psi$, we have that that $\psi t \mapsto \psi _1 t $ and $\psi t \mapsto \psi _2 t $, and then $\psi _1 t \mapsto  \psi _3 t$ and $\psi _2 t \mapsto  \psi_3t $. If, otherwise, $\vphi = (\lambda x:A . \chi) t \mapsto \chi[t/x]=\vphi_1$ and $\vphi = (\lambda x:A . \chi) t \mapsto (\lambda x:A . \chi ') t =\vphi_2$, we take $\vphi_3 $ to be $\chi'[t/x]$. Indeed, clearly, $\vphi_2=(\lambda x:A . \chi ') t \mapsto \chi'[t/x]=\vphi_3$. Moreover,  by Lemma \ref{lem:reduction-substitution} and by possibly renaming the  variables which are bound in $\chi$, if $\chi \mapsto \chi'$ then $\vphi_1=\chi[t/x]\mapsto \chi'[t/x]=\vphi_3$. In conclusion, $\vphi = \psi t$ verifies the statement in all  cases.
\end{itemize}
\end{proof}

\section{Strong normalisation}
\label{sec:normalisation}

We generalise the proof of strong normalisation by Tait's reducibility technique presented in \cite[Chapter 6]{glt89}.


\begin{definition}[Normal and strongly normalising]
A term $t$ is normal if there is no term $t'$ and real number $p$ such that $t\mapsto _p t'$. A type constructor $\vphi $ is normal if there is no type constructor $\vphi ' $ such that $\vphi \mapsto \vphi '$.

A term $t$ (type constructor $\vphi $) is strongly normalising if there exists no infinite sequence of reduction steps  $t\mapsto _{p_1} t_1 \mapsto _{p_2}   t_2 \mapsto _{p_3}   t_3  \dots $ ($\vphi  \mapsto \vphi  _1 \mapsto \vphi  _2 \mapsto \vphi  _3  \dots $)
\end{definition}

We define now, for any type $A:*$ and kind $\Phi:\square$, the set $\red_A$ of reducible terms of type $A$ and $\red_{\Phi}$ of reducible type constructors of kind $\Phi$.  

%

\begin{definition}[Reducibles and sets of reducibles]For any type $A:*$ and kind $\Phi:\square$, the set $\red_A$ of reducibles of type $A$ and the set $\red _\Phi$ of reducibles of kind $\Phi$ are defined as follows:
\begin{itemize}
\item for $\Phi=*$, $\red_*$ contains all strongly normalising type constructors of kind $*$.

\item for $\Phi=\Pi x:B.\Psi$, $\red_{\Pi y:B.\Psi}$ contains all type constructors $\vphi  :\Pi y:B.\Psi$ such that, for any term $s:B$,  
$\vphi  s\in \red_{\Psi[s/x]}$.

\item for $A=P$ atomic type, $\red_P$ contains all strongly normalising terms of type $A$.

\item for $A=B\et C$, $\red_{B\et C}$ contains all terms $t:B\et C$ such that $t\pi_0\in \red_B$ and $t\pi_1\in \red_C$.


\item for $A=\all y: B. C$,  $\red_{\all y: B. C}$ contains all terms $t:(\all y:B. C)$ such that, for any term $s\in \red_B$,  $ts\in \red_{C[s/y]}$.

\item for $A=\chot B$,  $\red_{\chot B}$ contains all terms $t:\chot B$ such that $t\nu \in \red_B$.

\end{itemize}
The set $\red$ of all reducibles is defined as follows:
\[(\bigcup_{A:*} \red_A) \cup ( \bigcup_{\Psi:\square} \red_{\Psi})\]
\end{definition}
Clearly, the definition of the set $\red_{\Pi y :B. \Psi}$ of all reducibles of kind $\Pi y :B. \Psi$ is modelled after the definition of the set $\red_{\all y:B.C}$ of all reducibles of type $\all y:B.C$. Nevertheless, since during the evaluation of type constructors we do not reduce the terms to which they are applied, there is no need to mention whether or not the arguments of type constructors of kind $\Pi y:B. \Psi$ must be reducible in order to obtain a reducible type constructor application. As we will show, all terms are reducible, but this does not influence in any way the evaluation termination proof for type constructors.

\begin{definition}[Neutral terms and type constructors]
A term or type constructor is neutral if it is in {\bf none} of the following forms:
\[\lan s,t \ran  \qquad \lambda x.t \qquad s\cho{p}t \qquad \lambda x . \vphi   \]
\end{definition}That is, $t$ is neutral if, and only if, each redex occurring in $ts, t\pi_i, t\nu $ either occurs in $t$ or occurs in $s$; and $\vphi $ is neutral if, and only if, each redex occurring in $\vphi t$ occurs in $\vphi $.

Let us introduce three conditions that we will show to hold for the sets of reducible terms. 
\begin{itemize}
\item[$\cruno$] 
\begin{itemize}
\item[$*$] If $t\in\red _A$, then $t$ is strongly normalising. 
\item[$\square$] If $\vphi \in\red _\Phi$, then $\vphi $ is strongly normalising.
\end{itemize}
\item[$\crdue$] 
\begin{itemize}
\item[$*$] If  $t:A$, $t':A'$, $t\in \red _A$ and $t\mapsto _p t'$, then $t' \in \red _{A'}$.
\item[$\square$] If $\vphi  \in \red _\Phi$ and $\vphi  \mapsto \vphi  '$, then $\vphi  ' \in \red _ \Phi$.
\end{itemize}
\item[$\crtre$] 
\begin{itemize}
\item[$*$] If 
\begin{itemize}
\item[--] $t  :A$ is neutral and 
\item[--] for each $t' :A'$ such that $t\mapsto _p t'$, we have that $t'\in \red _{ A'}$
\end{itemize}
then $t\in \red _ A$.
\item[$\square$] 
If \begin{itemize}
\item[--] $\vphi  $ is neutral and 
\item[--] for each $ \vphi  '$ such that $\vphi \mapsto \vphi  '$, we have that $\vphi  '\in \red _\Phi$
\end{itemize}
then $\vphi  \in \red _ \Phi $.
\end{itemize}
\end{itemize}
Where $A'$ is a type with the same propositional and quantificational structure as $A$ but that possibly contains a term which is the one-step reduced form of the corresponding term occurring in $A$.

Notice that $\crdue$ states that $\red$ is closed under evaluation (the relation $\mapsto _p$ on terms and $\mapsto$ on type constructors), and $\crtre$ states that $\red$ is closed under those instances of anti-evaluation (the inverse of $\mapsto _p$ and $\mapsto$) that yield neutral terms.

\begin{proposition}\label{prop:condred}
For any type $A$ and kind $\Phi$, $\red _A$ and $\red _\Phi$ verify $\cruno, \crdue$ and $\crtre$.
\end{proposition}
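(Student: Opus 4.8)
The plan is to prove $\cruno$, $\crdue$ and $\crtre$ simultaneously by induction on the propositional, quantificational and kind structure of $A$ and of $\Phi$, treating the clauses of the definition of $\red$ one at a time. The point that licenses this induction is that each $\red_A$ (respectively $\red_\Phi$) is defined in terms of sets $\red_{A'}$ (respectively $\red_{\Phi'}$) whose index is structurally simpler: $\red_{\all y:B.C}$ mentions $\red_B$ and $\red_{C[s/y]}$, $\red_{B\et C}$ mentions $\red_B$ and $\red_C$, $\red_{\chot B}$ mentions $\red_B$, and $\red_{\Pi x:B.\Psi}$ mentions $\red_{\Psi[s/x]}$. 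A substitution such as $C[s/y]$ leaves the propositional and quantificational structure of $C$ unchanged, which is exactly the invariant already flagged in the statement of $\crdue$; so the induction measure is genuinely the shape of the type, and neither the defining substitutions nor the type change $A$ to $A'$ permitted by $\crdue$ increase it.

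For the base cases $\Phi=*$ and $A=P$ atomic, where $\red_*$ and $\red_P$ are by definition the strongly normalising constructors and terms, the three conditions are essentially immediate. $\cruno$ is the definition itself. For $\crdue$, if $t$ is strongly normalising and $t\mapsto_p t'$, then $t'$ is strongly normalising, since an infinite reduction out of $t'$ would prefix to an infinite reduction out of $t$; the side requirement $t':A'$ is supplied by subject reduction, Theorem \ref{thm:subject-reduction}. For $\crtre$, if $t$ is neutral and every one-step reduct is strongly normalising, then $t$ is strongly normalising, because any reduction sequence starting at $t$ factors through one of its reducts.

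The inductive cases are driven by the elimination form attached to each connective, and the decisive one is $A=\all y:B.C$. For $\cruno$ I would apply a member $t$ of $\red_{\all y:B.C}$ to a fresh variable $x:B$, which lies in $\red_B$ by $\crtre$ on $B$ (a variable is neutral and has no reducts); then $tx\in\red_{C[x/y]}$ is strongly normalising by $\cruno$ on $C[x/y]$, and reductions of $t$ lift to reductions of $tx$, so $t$ is strongly normalising. For $\crdue$ I would note $ts\mapsto_p t's$ for every $s\in\red_B$ and conclude by $\crdue$ on $C[s/y]$. For $\crtre$, given a neutral $t$ all of whose reducts are reducible and an $s\in\red_B$, I would show $ts\in\red_{C[s/y]}$ by an inner induction on the length of the longest reduction of $s$, which is finite by $\cruno$ on $B$: the application $ts$ is neutral, so its reducts are exactly the $t's$ (reducible by hypothesis, whence $t's\in\red_{C[s/y]}$) and the $ts'$ for $s\mapsto s'$ (where $s'\in\red_B$ by $\crdue$ on $B$ and has a shorter maximal reduction), and $\crtre$ on $C[s/y]$ then yields $ts\in\red_{C[s/y]}$. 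The cases $A=B\et C$ and $A=\chot B$ follow the same template with the eliminations $\pi_0,\pi_1$ and $\nu$ in place of application, and are in fact simpler because there is no argument on which to run an inner induction. The kind case $\Pi x:B.\Psi$ is simpler still, since, as the paper observes, arguments of type constructors are never reduced, so for neutral $\vphi$ the reducts of $\vphi s$ are only the $\vphi' s$.

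The main obstacle is the $\crtre$ step in the application case: it is the only place requiring the nested induction on the strongly normalising argument together with a careful enumeration of the reducts of a neutral term sitting in an elimination position. In the present calculus this enumeration needs extra care because of the oracle reductions, whose redexes $o_0\nu$ and $(o_1 t)\nu$ are fired relative to the surrounding term context $\calc$ rather than locally, so a neutral term placed under $\nu$ can itself head a redex. I would therefore verify, case by case, that the only new top-level reducts introduced in the positions $ts$, $t\pi_i$ and $t\nu$ by a neutral $t$ are the oracle reducts, and that the closed, oracle-free terms returned by the oracular functions are reducible, so that they too can be absorbed through $\crtre$ on the smaller type. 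The remaining routine point, to be carried uniformly through every case, is the bookkeeping of the type change from $A$ to $A'$ allowed by $\crdue$, which is harmless precisely because it preserves the structural measure on which the whole induction rests.
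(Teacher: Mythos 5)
Your proposal follows essentially the same route as the paper's proof: induction on the structure of the type or kind, base cases immediate from the definition of $\red_P$ and $\red_*$ as the strongly normalising objects, the case $A=\all y:B.C$ handled by applying to a fresh variable (reducible by $\crtre$, being normal and neutral) for $\cruno$ and by an inner induction on the length of the longest reduction sequence of the argument $s$ for $\crtre$, with the $\et$, $\chot$ and $\Pi$ cases following the same template through their elimination forms ($\pi_i$, $\nu$, application) and the kind case simplified because arguments of type constructors are never reduced. Your extra attention to the context-dependent oracle redexes under $\nu$ and to why substitution does not disturb the induction measure goes slightly beyond what the paper's proof makes explicit, but it does not alter the argument.
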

\begin{proof}We prove the statement by induction on the number of symbol occurrences in $A$ and in $\Phi$.
In the base case for types, $A=P$ is an atomic type. 

\begin{itemize}
\item $\cruno$. Since  $\red _P$ only contains strongly normalising terms by definition,  $\cruno $ is trivially verified. 
\item $\crdue$. Since all elements of $\red _P$ are strongly normalising by definition and since a strongly normalising term can only reduce to strongly normalising terms, $\crdue$ is verified. 
\item $\crtre$. We suppose that $t:P$ is neutral and that, for each $t'$ such that $t\mapsto _p t'$, we have $t'\in \red _{P'}$. We need to prove that $t\in \red _P$. Any sequence of reductions starting from $t$ must be of the form $t \mapsto _{p_1} t_1 \mapsto _{p_1} \dots $ where $t_1\in \red _{P_1} $ and thus $t_1$ is strongly normalising. Hence, any sequence of reductions starting from $t$ must be finite. But this means that $t:P$ is strongly normalising, which implies, by definition of $\red _P$, that $t\in \red _P$. Hence, also $\crtre$ is verified. 
\end{itemize}

In the base case for kinds, $\Phi=*$ is an atomic kind. 

\begin{itemize}
\item $\cruno$. Since  $\red _*$ only contains strongly normalising elements by definition,  $\cruno $ is trivially verified. 
\item $\crdue$. Since all elements of $\red _*$ are strongly normalising by definition and since a strongly normalising constructor can only reduce to strongly normalising constructors, $\crdue$ is verified. 
\item $\crtre$. We suppose that $\vphi :*$ is neutral and that, for each $\vphi '$ such that $\vphi\mapsto  \vphi '$, we have $\vphi '\in \red _*$. We need to prove that $\vphi\in \red _*$. Any sequence of reductions starting from $\vphi$ must be of the form $\vphi \mapsto  \vphi_1 \mapsto  \dots $ where $\vphi _1\in \red _* $ and thus $\vphi _1$ is strongly normalising. Hence, any sequence of reductions starting from $\vphi$ must be finite. But this means that $\vphi:*$ is strongly normalising, which implies, by definition of $\red _*$, that $\vphi\in \red _*$. Hence, also $\crtre$ is verified. 
\end{itemize}

Suppose now that, for any type $F$ and kind $\Psi$ containing less than $n$ symbols, $\red _F$ and $\red_\Psi$ verify $\cruno, \crdue$ and $\crtre$. We prove that this holds also for any type $A$ and kind $\Phi$ containing $n$ symbols. 

Suppose that $A=B\et C$.
\begin{itemize}
\item $\cruno$. If $t\in \red _{B\et C}$, then $t\pi_0\in \red _B$. Hence, and by induction hypothesis ($\cruno$), $t\pi_0$ is strongly normalising. Suppose, by reasoning indirectly, that an infinite reduction sequence of the form $t\mapsto _{p_1} t_1\mapsto _{p_2}t_2 \dots $ exists, then also an infinite reduction sequence of the form $t\pi_0\mapsto _{p_1} t_1\pi_0\mapsto _{p_2}t_2\pi_0 \dots $ exists. Which is absurd since $t\pi_0 $ is strongly normalising. Hence, $t$ is strongly normalising and $\cruno$ is verified.

\item $\crdue$. Since $t\in \red _{B\et C}$, we have, by definition of $\red _{B\et C}$, that $t\pi_0\in \red _B$ and $t\pi_1\in \red _C$. Moreover, if $t\mapsto _p t'$, then $t\pi_0\mapsto _p t'\pi_0$ and $t\pi_1\mapsto _p t'\pi_1$. From $t\pi_0\in \red _B$, $t\pi_1\in \red _C$, $t\pi_0\mapsto _p t'\pi_0$ and $t\pi_1\mapsto _p t'\pi_1$, by induction hypothesis ($\crdue$), we obtain that $t'\pi_0\in \red _{B'}$ and $t'\pi_1\in \red _{C'}$. From this and by definition of $\red _{B\et C}$, we have that $t'\in \red _{B'\et C'}$, and thus that $\crdue$ is verified.

\item $\crtre$. We suppose that $t:B\et C$ is neutral and that, for each $t'$ such that $t\mapsto _p t'$, we have $t'\in \red _{B'\et C'}$. We need to prove that $t\in \red _{B\et C}$. Since $t$ is neutral, $t\pi_0$ can only reduce to a term of the form $t'\pi_0$ such that $t\mapsto _p t'$. Since the local assumptions guarantee us that $t'\in \red _{B'\et C'}$, we also know that $t'\pi_0\in \red _{B'}$. But then, since $t\pi_0$ is neutral and can only reduce to terms in $\red _{B'} $, by inductive hypothesis ($\crtre$), we have that $t\pi_0\in \red_B$. Since the same argument guarantees that $t\pi_1\in \red_C$, we have that $t\in \red _{B\et C}$ and thus that $\crtre$ is verified. 
\end{itemize}

Suppose that $A=\all y:B. C$.
\begin{itemize}
\item $\cruno$. If $t\in \red _{\all y:B. C}$, then $tx\in \red _{C[x/y]}$ for any $x: B$. Indeed, by inductive hypothesis ($\crtre$), from $x:A$ being normal and neutral we can infer that $x\in \red _A$. By induction hypothesis ($\cruno$), $tx\in \red _{C[x/y]}$ implies that $tx$ is strongly normalising. Suppose now, by reasoning indirectly, that an infinite reduction sequence of the form $t\mapsto _{p_1} t_1\mapsto _{p_2}t_2 \dots $ exists, then also an infinite reduction sequence of the form $tx\mapsto _{p_1} t_1x\mapsto _{p_2}t_2x \dots $ exists. Which is absurd since $tx$ is strongly normalising. Hence, $t$ is strongly normalising and $\cruno$ is verified.

\item $\crdue$. Since $t\in \red _{\all y:B. C}$, we have, by definition of $\red _{\all y:B. C}$, that $ts\in \red _{C[s/y]}$ for any $s\in \red_B$. Moreover, if $t\mapsto _p t'$, then $ts\mapsto _p t's$. From $ts \in \red _{C[s/y]}$ and $ts\mapsto _p t's$, by induction hypothesis ($\crdue$), we obtain that $t's\in \red _{C'[s/y]}$. Since $s$ is a generic element of $\red_B$, from $t's\in \red _{C'[s/y]}$ and by definition of $\red _{\all y:B. C'}$, we have that $t'\in \red _{\all y:B. C'}$, and thus that $\crdue$ is verified.

\item $\crtre$. We suppose that $t:(\all y:B. C)$ is neutral and that, for each $t'$ such that $t\mapsto _p t'$, we have $t'\in \red _{\all y:B. C'}$. We need to prove that $t\in \red _{\all y:B. C}$. In order to show this, it is enough to  show that $ts \in \red _{C[s/y]}$ for any $s\in \red _B$. Since, by induction hypothesis ($\crdue$),  $s\in \red _B$ implies that $s$ is strongly normalising, we know that the longest sequence of reductions starting from $s$ is finite and we can reason by induction on its length $l(s)$. 

If $l(s)=0$, since $t$ is neutral, then $ts$ can only reduce to terms of the form $t's$ such that $t\mapsto _p t'$. Since the local assumptions guarantee us that $t'\in \red _{\all y:B. C'}$, we also know that $t's\in \red _{C'[s/y]}$. But then, since $ts$ is neutral and can only reduce to terms in $\red _{C'[s/y]}$, by inductive hypothesis ($\crtre$), we have that $ts\in \red_{C[s/y]}$ as desired. 

Suppose now that $ts \in \red _{C[s/y]}$ for any $s\in \red _B$ such that $l(s)<n$, we show that this holds also for any $s\in \red _B$ such that $l(s)=n$. 

Since $t$ is neutral, $ts$ can either reduce to a term of the form $t's$ such that $t\mapsto _p t'$ or of the form $ts'$ such that $s\mapsto _p s'$. Since the local assumptions guarantee us that $t'\in \red _{\all y:B.C'}$, we also know that $t's\in \red _{C'[s/y]}$. We moreover have that $l(s')<l(s)$, and hence, by local inductive hypothesis, we know that $ts'\in \red _{C[s'/y]}$. But then, since $ts$ is neutral and can only reduce to terms in $\red$ (either in $\red _{C'[s/y]}$ or in $\red _{C[s'/y]}$), by global inductive hypothesis ($\crtre$), we have that $ts\in \red _{C[s/y]}$.

Since we just showed that $ts\in \red _{C[s/y]}$ for any $s\in \red _B$, we have that $t\in \red _{\all y:B.C}$, which means that $\crtre$ is verified.
\end{itemize}

Suppose that $A=\chot B$.
\begin{itemize}
\item $\cruno$. If $t\in \red _{\chot B}$, then $t\nu\in \red _B$. Hence, and by induction hypothesis ($\cruno$), $t\nu$ is strongly normalising. Suppose, by reasoning indirectly, that an infinite reduction sequence of the form $t\mapsto _{p_1} t_1\mapsto _{p_2}t_2 \dots $ exists, then also an infinite reduction sequence of the form $t\nu\mapsto _{p_1} t_1\nu\mapsto _{p_2}t_2\nu \dots $ exists. Which is absurd since $t\nu $ is strongly normalising. Hence, $t$ is strongly normalising and $\cruno$ is verified.

\item $\crdue$. Since $t\in \red _{\chot B}$, we have, by definition of $\red _{\chot B}$, that $t\nu\in \red _B$. Moreover, if $t\mapsto _p t'$, then $t\nu\mapsto _p t'\nu$. From $t\nu\in \red _B$ and $t\nu\mapsto _p t'\nu$, by induction hypothesis ($\crdue$), we obtain that $t'\nu\in \red _{B'}$. From this and by definition of $\red _{\chot B}$, we have that $t'\in \red _{\chot B'}$, and thus that $\crdue$ is verified.

\item $\crtre$. We suppose that $t:\chot B$ is neutral and that, for each $t'$ such that $t\mapsto _p t'$, we have $t'\in \red _{\chot B'}$. We need to prove that $t\in \red _{\chot B}$. Since $t$ is neutral, $t\nu$ can only reduce to a term of the form $t'\nu$ such that $t\mapsto _p t'$. Since the local assumptions guarantee us that $t'\in \red _{\chot B'}$, we also know that $t'\nu\in \red _{B'}$. But then, since $t\nu$ is neutral and can only reduce to terms in $\red _{B'}$, by inductive hypothesis ($\crtre$), we have that $t\nu\in \red_B$. Hence, by definition of $\red_{\chot B}$, we have that $t\in \red _{\chot B}$ and thus that $\crtre$ is verified. 
\end{itemize}

Suppose that $\Phi=\Pi y:B. \Xi$.
\begin{itemize}
\item $\cruno$. If $\vphi\in \red _{\Pi y:B. \Xi}$, then $\vphi x\in \red _{\Xi[x/y]}$ for any $x: B$. 
By induction hypothesis ($\cruno$), $\vphi x\in \red _{\Xi[x/y]}$ implies that $\vphi x$ is strongly normalising. Suppose now, by reasoning indirectly, that an infinite reduction sequence of the form $\vphi\mapsto \vphi_1\mapsto \vphi_2 \dots $ exists, then also an infinite reduction sequence of the form $\vphi x\mapsto \vphi _1x\mapsto \vphi _2x \dots $ exists. Which is absurd since $\vphi x$ is strongly normalising. Hence, $t$ is strongly normalising and $\cruno$ is verified.

\item $\crdue$. Since $\vphi \in \red _{\Pi y:B. \Xi}$, we have, by definition of $\red _{\Pi y:B. \Xi}$, that $\vphi s\in \red _{\Xi[s/y]}$ for any 
$s:B$. 
Moreover, if $\vphi \mapsto \vphi '$, then $\vphi s\mapsto \vphi 's$. From $\vphi s \in \red _{\Xi[s/y]}$ and $\vphi s\mapsto \vphi 's$, by induction hypothesis ($\crdue$), we obtain that $\vphi 's\in \red _{\Xi[s/y]}$. Since $s$ is a generic term of type $B$, 
 from $\vphi 's\in \red _{\Xi[s/y]}$ and by definition of $\red _{\Pi y:B. \Xi }$, we have that $\vphi '\in \red _{\Pi y:B. \Xi }$, and thus that $\crdue$ is verified.

\item $\crtre$. We suppose that $\vphi :(\Pi y:B. \Xi)$ is neutral and that, for each $\vphi '$ such that $\vphi \mapsto _p \vphi '$, we have $\vphi '\in \red _{ \Pi y:B. \Xi }$. We need to prove that $\vphi \in \red _{\Pi y:B. \Xi}$. In order to show this, it is enough to  show that $\vphi s \in \red _{\Xi[s/y]}$ for any
$s:B$.
Since $\vphi $ is neutral, then $\vphi s$ can only reduce to terms of the form $\vphi 's$ such that $\vphi \mapsto _p \vphi '$. Now, the local assumptions guarantee us that $\vphi '\in \red _{ \Pi y:B. \Xi}$. Therefore, we also know that $\vphi 's\in \red _{\Xi[s/y]}$. But then, since $\vphi s$ is neutral and can only reduce to terms in $\red _{\Xi[s/y]}$, by inductive hypothesis ($\crtre$), we have that $\vphi s\in \red_{\Xi[s/y]}$ as desired. Considering that we just showed that $\vphi s\in \red _{\Xi[s/y]}$ for any $s:B$, 
we have that $\vphi \in \red _{\Pi y:B. \Xi}$, which means that $\crtre$ is verified.

\end{itemize}

\end{proof}

We now show that the term constructors corresponding to logical introduction rules preserve reducibility. 
\begin{lemma}[Pairing lemma]\label{lem:pairing}
If $s\in \red _A $ and $t\in \red_B$ then $\lan s,t\ran\in \red _{A\et B}$.
\end{lemma}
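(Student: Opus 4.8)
The plan is to unfold the definition of $\red_{A\et B}$: to establish $\lan s,t\ran\in\red_{A\et B}$ it suffices to show $\lan s,t\ran\pi_0\in\red_A$ and $\lan s,t\ran\pi_1\in\red_B$. I will treat the first projection, the second being entirely symmetric. The key observation is that $\lan s,t\ran\pi_0$ is a \emph{neutral} term, so its reducibility can be derived from $\crtre$, which reduces the problem to checking that every one-step reduct of $\lan s,t\ran\pi_0$ is reducible in the appropriate set.

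Next I would enumerate these reducts. A term of the form $\lan s,t\ran\pi_0$ can reduce in exactly three ways: by the principal reduction $\lan s,t\ran\pi_0\mapsto_1 s$; by a reduction $s\mapsto_p s'$ inside the first component, giving $\lan s',t\ran\pi_0$; or by a reduction $t\mapsto_p t'$ inside the second component, giving $\lan s,t'\ran\pi_0$. For the principal reduct we are done immediately, since $s\in\red_A$ is one of the hypotheses and Theorem \ref{thm:subject-reduction} guarantees $s$ still carries type $A$. The two remaining reducts have the same shape as the term we started with, so they call for an induction.

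Since both $s$ and $t$ are strongly normalising by $\cruno$, I would carry out an induction on $l(s)+l(t)$, the sum of the lengths of their longest reduction sequences, proving the universally quantified statement that $\lan s,t\ran\pi_0\in\red_A$ for all types $A,B$ and all $s\in\red_A$, $t\in\red_B$. In the inductive step, from $s\mapsto_p s'$ I obtain $s'\in\red_{A'}$ by $\crdue$, and since $l(s')<l(s)$ the induction hypothesis yields $\lan s',t\ran\pi_0\in\red_{A'}$; symmetrically, from $t\mapsto_p t'$ I get $t'\in\red_{B'}$ by $\crdue$ and $\lan s,t'\ran\pi_0\in\red_A$ by the induction hypothesis, the type of the first projection remaining $A$ because its first component $s$ is untouched. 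Having shown that all reducts lie in the appropriate reducibility sets, $\crtre$ delivers $\lan s,t\ran\pi_0\in\red_A$, and the symmetric argument gives $\lan s,t\ran\pi_1\in\red_B$, whence $\lan s,t\ran\in\red_{A\et B}$ by definition.

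The only delicate point is the bookkeeping of types: under a reduction of $t$ the conjunct $B$ may turn into some $B'$ with the same propositional and quantificational structure (Theorem \ref{thm:subject-reduction}), so one must check that the type $A$ attached to the first projection is genuinely unaffected and that $\crtre$ is invoked with the correct reduced types for each reduct. This is routine given subject reduction and $\crdue$; the reducibility-condition machinery does all the real work, and I expect no genuine obstacle beyond setting up the nested induction on $l(s)+l(t)$ correctly.
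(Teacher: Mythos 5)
Your proposal is correct and follows essentially the same route as the paper's own proof: reduce to the two projections, observe that $\lan s,t\ran\pi_0$ is neutral, induct on $l(s)+l(t)$ using $\cruno$, handle the three one-step reducts via the hypothesis, $\crdue$, and the induction hypothesis, and conclude with $\crtre$. The additional care you take with the primed types $A'$, $B'$ under subject reduction is a welcome refinement of the same argument.
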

\begin{proof}In order to prove the statement, we need to show that 
$\lan s,t\ran\pi_0\in \red _A$ and $\lan s,t\ran\pi_1\in \red _B$.

Since we have that $s\in \red _A $ and $t\in \red_B$ and because of $\cruno$, we know that all sequences of reductions starting form $s$ and $t$ are of finite length. Hence, we can prove the statement by induction on $l(s)+l(t)$ where the function $l(\;)$ has as value the length of the longest sequence of reductions starting from its argument.

Let us consider $\lan s,t\ran\pi_0:A$. In one step, it can only reduce to one of the following terms:
\begin{itemize}
\item $s\in \red _A$
\item $\lan s',t\ran\pi_0$ where $s\mapsto _p s'$. Since, by $\crdue $, $s'\in \red_{A'}$ and since $l(s')<l(s)$, we can apply the inductive hypothesis to $\lan s',t\ran\pi_0$ and conclude that $\lan s',t\ran\pi_0\in \red _{A'}$.  
 
\item $\lan s,t'\ran\pi_0$ where $t\mapsto _p t'$. Since, by $\crdue $, $t'\in \red_{B'}$ and since $l(t')<l(t)$, we can apply the inductive hypothesis to $\lan s,t'\ran\pi_0$ and conclude that $\lan s,t'\ran\pi_0\in \red _{A}$.  
\end{itemize}
Hence, in one step, $\lan s,t\ran\pi_0$  can only reduce to terms in $\red$ (either in $\red _A$ or in $\red _{A'}$). By $\crtre$ we then know that it is an element of $\red _A$.

Since we can conclude by an analogous argument that $\lan s,t\ran\pi_1\in \red _B$, we have that $\lan s,t\ran \in \red _{A\et B}$, as desired.
\end{proof}

%

\begin{lemma}[Abstraction lemma]\label{lem:abstraction}
If, for any $s\in \red _A$, $t[s/x]\in \red _{B[s/x]} $, then $\lambda x.t \in \red _{\all x: A. B}$.
\end{lemma}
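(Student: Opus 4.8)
The plan is to unfold the definition of $\red_{\all x:A.B}$: to establish $\lambda x.t\in\red_{\all x:A.B}$ it suffices to show that $(\lambda x.t)s\in\red_{B[s/x]}$ for every $s\in\red_A$. The decisive structural observation is that $\lambda x.t$ is \emph{not} neutral, so $\crtre$ cannot be applied to it directly; the application $(\lambda x.t)s$, however, \emph{is} neutral, so the whole argument is routed through $\crtre$ applied to this application, once all of its one-step reducts have been shown to be reducible.

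First I would record that both $t$ and $s$ are strongly normalising, so that a well-founded induction is available. Since a variable $x:A$ is neutral and normal, $\crtre$ gives $x\in\red_A$; instantiating the hypothesis at $s:=x$ yields $t=t[x/x]\in\red_{B[x/x]}=\red_B$, whence $t$ is strongly normalising by $\cruno$, and $s\in\red_A$ is strongly normalising directly by $\cruno$. I would then fix $s\in\red_A$ and argue by induction on $l(t)+l(s)$, the sum of the lengths of the longest reduction sequences issuing from $t$ and from $s$, proving the statement uniformly for every body $t$ satisfying the hypothesis and every reducible argument. In the inductive step I apply $\crtre$ to the neutral term $(\lambda x.t)s$ by inspecting its three kinds of one-step reducts. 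Firing the $\beta$-redex gives $t[s/x]$, which lies in $\red_{B[s/x]}$ immediately by hypothesis. A reduction inside the body gives $(\lambda x.t')s$ with $t\mapsto_p t'$; here I first check that $t'$ inherits the hypothesis, namely that for any $u\in\red_A$ we have $t[u/x]\mapsto_p t'[u/x]$ (reduction commutes with substitution, as supplied by Lemma \ref{lem:substitution} and Lemma \ref{lem:reduction-substitution}), so that $\crdue$ turns $t[u/x]\in\red_{B[u/x]}$ into $t'[u/x]\in\red_{B'[u/x]}$; since $l(t')<l(t)$, the induction hypothesis applies to $t'$. A reduction inside the argument gives $(\lambda x.t)s'$ with $s\mapsto_p s'$, where $\crdue$ places $s'$ back in the reducibility set of the (structurally unchanged) domain and $l(s')<l(s)$, so the induction hypothesis applies again. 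As every one-step reduct of $(\lambda x.t)s$ thus lies in $\red$, $\crtre$ yields $(\lambda x.t)s\in\red_{B[s/x]}$, and since $s\in\red_A$ was arbitrary we conclude $\lambda x.t\in\red_{\all x:A.B}$.

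The main obstacle is not any single computation but the correct organisation of the induction around the non-neutral $\lambda x.t$: one must simultaneously vary the body $t$ and the argument $s$ so that both the reduction-inside-body and reduction-inside-argument reducts fall under a strictly smaller instance, and one must verify that the lemma's hypothesis is genuinely preserved when $t$ is reduced. A subsidiary delicacy, peculiar to this term-dependent setting and already handled implicitly in the $\crtre$ case for $\all y:B.C$ in Proposition \ref{prop:condred}, is the drift of the ambient types from $A,B$ to structurally identical $A',B'$ under reduction: I would note that the product domain $A$ does not contain the bound variable and so is stable under reduction of the argument, and that $\crdue$ is exactly the closure property making these structurally equivalent reducibility sets interchange as required, so the bookkeeping closes without extra hypotheses.
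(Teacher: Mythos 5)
Your proof is correct and follows essentially the same route as the paper's: reduce the goal to showing $(\lambda x.t)s\in\red_{B[s/x]}$ for arbitrary $s\in\red_A$, induct on $l(t)+l(s)$, enumerate the three one-step reducts of the neutral application, and close with $\crtre$. Your two refinements---deriving the strong normalisation of $t$ by instantiating the hypothesis at a variable, and explicitly checking that the hypothesis is inherited by $t'$ before invoking the induction hypothesis---are welcome tightenings of steps the paper leaves implicit, not a different argument.
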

\begin{proof}In order to prove the statement, we need to show that, for any $s \in \red_ A$, $(\lambda x.t)s \in \red _{B[s/x]}$. Since, by Proposition \ref{prop:condred} ($\cruno$), $s \in \red_ A$ and $t \in \red _{\all x: A. B}$ imply that $s$ and $t$ are strongly normalising, we can reason by induction on $l(s)+l(t)$, where $l(\;)$ has as value the length of the longest sequence of reductions starting from its argument.

Let us consider the possible outcomes of a one-step reduction of $(\lambda x.t)s$:
\begin{itemize}
\item $t[s/x]$, which is assumed to be an element of $\red_{B[s/x]}$. 
\item $(\lambda x.t')s$. Since $t\mapsto _p t'$, $l(s)+l(t')<l(s)+l(t)$. We can then use the induction hypothesis to conclude that $(\lambda x.t')s\in \red _B'[s/x]$.
\item $(\lambda x.t)s'$. Since $s\mapsto _p s'$, we also have that $l(s')+l(t)<l(s)+l(t)$. We can then use the induction hypothesis to conclude that $(\lambda x.t)s'\in \red _B[s'/x]$.
\end{itemize}
Since all terms to which $(\lambda x.t)s:B[s/x]$ can reduce in one step are elements of $\red $ (of $\red _{B[s/x]}$, of $B'[s/x]$ or of$B[s'/x]$), by Proposition \ref{prop:condred} ($\crdue$), we know that $(\lambda x.t)s\in \red _{B[s/x]}$. Since, moreover, $s$ is a generic element of $\red_B$, by definition of $\red _{\all x:B. C}$, we can conclude, as desired, that $(\lambda x.t)\in \red _{\all x: A. B}$.
\end{proof}

\begin{lemma}[Nondeterministic choice lemma]\label{lem:ndchoice}
If $s, t\in \red_A$ then $ s\cho{p}t \in \red _{\chot A}$.
\end{lemma}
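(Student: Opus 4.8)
The plan is to unfold the definition of $\red_{\chot A}$ and thereby reduce the goal to showing $(s\cho{p}t)\nu \in \red_A$: by definition $s\cho{p}t$ belongs to $\red_{\chot A}$ precisely when $(s\cho{p}t)\nu\in\red_A$, so this is the only thing to establish. The key observation, and the one that makes the whole argument go through, is that although $s\cho{p}t$ is \emph{not} neutral, the term $(s\cho{p}t)\nu$ \emph{is} neutral: it has the form $(\cdot)\nu$ and hence matches none of $\lan s,t\ran$, $\lambda x.t$, $s\cho{p}t$, $\lambda x.\vphi$. This licenses an appeal to $\crtre$, so it suffices to show that every one-step reduct of $(s\cho{p}t)\nu$ is reducible of the appropriate type.

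Following the template of Lemmas \ref{lem:pairing} and \ref{lem:abstraction}, I would argue by induction on $l(s)+l(t)$, where $l(\;)$ returns the length of the longest reduction sequence from its argument; this is finite because $s,t\in\red_A$ are strongly normalising by $\cruno$. Inspecting Table \ref{tab:evaluation}, the redexes of $(s\cho{p}t)\nu$ are exactly the two head redexes of the choice elimination together with any redex internal to a branch, so the one-step reducts are precisely $s$ (with probability $p$), $t$ (with probability $1-p$), $(s'\cho{p}t)\nu$ where $s\mapsto_q s'$, and $(s\cho{p}t')\nu$ where $t\mapsto_q t'$. The first two lie in $\red_A$ by hypothesis. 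For a reduct $(s'\cho{p}t)\nu$, $\crdue$ gives $s'\in\red_{A'}$, and since $l(s')<l(s)$ the induction hypothesis applied to $s'$ and $t$ yields $s'\cho{p}t\in\red_{\chot A'}$, hence $(s'\cho{p}t)\nu\in\red_{A'}$; the reducts $(s\cho{p}t')\nu$ are handled symmetrically. Since every one-step reduct of $(s\cho{p}t)\nu$ thus lies in $\red$ (in $\red_A$ or in $\red_{A'}$), $\crtre$ delivers $(s\cho{p}t)\nu\in\red_A$, and therefore $s\cho{p}t\in\red_{\chot A}$. The base case $l(s)+l(t)=0$ is immediate, since then both branches are normal and the only reducts are $s$ and $t$, already in $\red_A$.

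The one point requiring care, and the closest thing to an obstacle, is the bookkeeping of the type superscripts: reducing inside a branch replaces $A$ by a type $A'$ with the same propositional and quantificational structure, as recorded by subject reduction (Theorem \ref{thm:subject-reduction}), so that the two branches are momentarily typed at $A$ and at $A'$. As in the pairing lemma, I treat $\red_A$ and $\red_{A'}$ uniformly, since the reducibility sets are defined by recursion on this shared structure and are insensitive to the particular terms occurring inside the type; this is exactly what licenses the appeal to the induction hypothesis across the structure-preserving reduction. Beyond this the argument is entirely routine and parallels the pairing and abstraction lemmas, the only genuinely new ingredient being the neutrality of $(s\cho{p}t)\nu$ that unlocks $\crtre$.
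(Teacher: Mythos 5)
Your proposal is correct and follows essentially the same route as the paper's own proof: reduce the goal to $(s\cho{p}t)\nu\in\red_A$, induct on $l(s)+l(t)$, enumerate the one-step reducts $s$, $t$, $(s'\cho{p}t)\nu$, $(s\cho{p}t')\nu$, and close with $\crdue$, the induction hypothesis, and $\crtre$. Your explicit remark that $(s\cho{p}t)\nu$ is neutral (even though $s\cho{p}t$ is not) is left implicit in the paper but is exactly the point that licenses the appeal to $\crtre$.
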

\begin{proof}In order to prove the statement, we need to show that $(s\cho{p}t)\nu  \in \red _A$.

Since we have that $s,t\in \red_A$ and because of $\cruno$, we know that all sequences of reductions starting form $s$ and $t$ are of finite length. Hence, we can prove the statement by induction on $l(s)+l(t)$ where the function $l(\;)$ has as value the length of the longest sequence of reductions starting from its argument. 

Let us consider $(s\cho{p}t)\nu:A$. In one step, it can only reduce to one of the following terms:
\begin{itemize}
\item $s\in \red _A$
\item $t\in \red _A$
\item $(s'\cho{p}t)\nu$ where $s\mapsto _p s'$. Since, by $\crdue $, $s'\in \red_{A'}$ and since $l(s')<l(s)$, we can apply the inductive hypothesis to $(s'\cho{p}t)\nu$ and conclude that $(s'\cho{p}t)\nu\in \red _{A'}$.  
 
\item $(s\cho{p}t')\nu$ where $t\mapsto _p t'$. Since, by $\crdue $, $t'\in \red_{A'}$ and since $l(t')<l(t)$, we can apply the inductive hypothesis to $(s\cho{p}t')\nu$ and conclude that $\lan s,t'\ran\pi_0\in \red _{A'}$.  
\end{itemize}
Hence, in one step, $(s\cho{p}t)\nu$ can only reduce to terms in $\red$ (either in $\red _{A}$ or in $\red _{A'}$). By $\crtre$ we then know that it is an element of $\red _A$.

We have then that $s\cho{p}t \in \red _{\chot A}$, as desired.
\end{proof}

\begin{lemma}[Abstraction lemma for type constructors]\label{lem:abstraction-tc}
If, for any 
$s:A$, 
$\vphi [s/x]\in \red _{\Xi[s/x]} $, then $\lambda x.\vphi  \in \red _{\Pi x: A. \Xi}$.
\end{lemma}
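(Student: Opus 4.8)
The plan is to mirror the proof of the Abstraction lemma (Lemma \ref{lem:abstraction}), exploiting the fact that the set $\red_{\Pi x:A.\Xi}$ is defined exactly like $\red_{\all x:A.B}$ except that its defining clause quantifies over \emph{all} terms $s:A$ rather than over reducible terms. Accordingly, to establish $\lambda x.\vphi\in\red_{\Pi x:A.\Xi}$ it suffices to show that $(\lambda x.\vphi)s\in\red_{\Xi[s/x]}$ for every term $s:A$.

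The crucial simplification compared with the term case is that the type-constructor reduction relation $\mapsto$ of Definition \ref{def:conv-equiv} never rewrites the term argument of an application: the only rule is the head step $\calt[(\lambda x:\psi_1.\psi_2)u]\mapsto\calt[\psi_2[u/x]]$, in which the argument $u$ is copied, not reduced. Hence the one-step reducts of $(\lambda x.\vphi)s$ are only $(i)$ the contraction $\vphi[s/x]$ and $(ii)$ a term $(\lambda x.\vphi')s$ with $\vphi\mapsto\vphi'$; there is no ``$s$ reduces'' case. This lets me run the induction on the single measure $l(\vphi)$, the length of the longest reduction from $\vphi$, rather than on $l(s)+l(\vphi)$.

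To see that $l(\vphi)$ is finite, I instantiate the hypothesis at $s=x$, a term of type $A$ which exists since $A:*$ (needed to form $\Pi x:A.\Xi$): then $\vphi=\vphi[x/x]\in\red_{\Xi[x/x]}=\red_\Xi$, so $\vphi$ is strongly normalising by $\cruno$. I then argue by induction on $l(\vphi)$, proving the full statement: if $\vphi$ satisfies the hypothesis, then $\lambda x.\vphi\in\red_{\Pi x:A.\Xi}$. Fix a term $s:A$. The term $(\lambda x.\vphi)s$ is neutral, being an application, so by $\crtre$ it is enough to check that every one-step reduct lies in $\red$. For reduct $(i)$, $\vphi[s/x]\in\red_{\Xi[s/x]}$ directly by hypothesis. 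For reduct $(ii)$, I first note that $\vphi'$ again satisfies the hypothesis: for every $s:A$, Lemma \ref{lem:reduction-substitution} (after renaming the bound variables of $\vphi$ away from $s$ to meet its free-variable side condition) turns $\vphi\mapsto\vphi'$ into $\vphi[s/x]\mapsto\vphi'[s/x]$, whence $\crdue$ applied to $\vphi[s/x]\in\red_{\Xi[s/x]}$ yields $\vphi'[s/x]\in\red_{\Xi[s/x]}$. Since $l(\vphi')<l(\vphi)$, the induction hypothesis gives $\lambda x.\vphi'\in\red_{\Pi x:A.\Xi}$, and so $(\lambda x.\vphi')s\in\red_{\Xi[s/x]}$ by definition. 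Thus all one-step reducts of $(\lambda x.\vphi)s$ are reducible, so $(\lambda x.\vphi)s\in\red_{\Xi[s/x]}$ by $\crtre$; as $s:A$ was arbitrary, $\lambda x.\vphi\in\red_{\Pi x:A.\Xi}$.

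The only genuinely delicate point, and the one I expect to be the main obstacle, is the propagation of the hypothesis from $\vphi$ to its reduct $\vphi'$ in case $(ii)$: one must guarantee that $\vphi'$ still satisfies ``$\vphi'[s/x]\in\red_{\Xi[s/x]}$ for all $s:A$'' and that the kind decoration $\Xi[s/x]$ is unchanged. This is precisely what the combination of Lemma \ref{lem:reduction-substitution} (to lift $\vphi\mapsto\vphi'$ to $\vphi[s/x]\mapsto\vphi'[s/x]$) and $\crdue$ (closure of $\red$ under reduction at a fixed kind) is designed to deliver; the rest of the argument is the routine neutrality-plus-$\crtre$ bookkeeping already used for terms.
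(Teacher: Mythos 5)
Your proof is correct and follows essentially the same route as the paper's: reduce the goal to showing $(\lambda x.\vphi)s\in\red_{\Xi[s/x]}$ for an arbitrary $s:A$, induct on $l(\vphi)$ only (since the constructor-level reduction never touches the argument $s$), analyse the two possible one-step reducts, and close with $\crtre$. You are in fact slightly more careful than the paper on two points it glosses over --- justifying that $l(\vphi)$ is finite by instantiating the hypothesis at $s=x$ rather than by presupposing $\vphi\in\red_{\Pi x:A.\Xi}$, and explicitly propagating the hypothesis to $\vphi'$ via Lemma \ref{lem:reduction-substitution} and $\crdue$ before invoking the induction hypothesis --- but these are refinements of the same argument, not a different one.
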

\begin{proof}In order to prove the statement, we need to show that, for any 
$s:A$, 
$(\lambda x. \vphi )s \in \red _{\Xi[s/x]}$. Since, by Proposition \ref{prop:condred} ($\cruno$), $\vphi  \in \red _{\Pi x: A. \Xi}$ implies that $\vphi $ is strongly normalising, we can reason by induction on $l(\vphi )$, where $l(\;)$ has as value the length of the longest sequence of reductions starting from its argument.

Let us consider the possible outcomes of a one-step reduction of $(\lambda x.\vphi )s$:
\begin{itemize}
\item $\vphi [s/x]$, which is assumed to be an element of $\red_{\Xi[s/x]}$. 
\item $(\lambda x.\vphi ')s$. Since $\vphi \mapsto _p \vphi '$, we have also $l(\vphi ')<l(\vphi )$. We can then use the induction hypothesis to conclude that $(\lambda x.\vphi ')s\in \red _\Xi[s/x]$.
\end{itemize}
Since all type constructors to which $(\lambda \vphi )s:\Xi[s/x]$ can reduce in one step are elements of $\red _{\Xi[s/x]}$, by Proposition \ref{prop:condred} ($\crdue$), we know that $(\lambda x.\vphi )s\in \red _{\Xi[s/x]}$. Since, moreover, $s$ is a generic 
term of type $B$, 
by definition of $\red _{\Pi x:B. \Xi}$, we can conclude, as desired, that $(\lambda x.\vphi )\in \red _{\Pi x: A. \Xi}$.
\end{proof}

\begin{theorem}[Reducibility theorem]\label{thm:red}
For any type $A:*$ and term $t:A$, $t\in \red_A$.
\end{theorem}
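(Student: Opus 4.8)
The plan is to prove, by induction on the typing derivation, the usual substitution-closed strengthening of the statement: whenever $x_1:A_1,\dots,x_n:A_n\vdash t:A$ and $s_1\in\red_{A_1},\dots,s_n\in\red_{A_n}$, the term $t[s_1/x_1,\dots,s_n/x_n]$ lies in $\red_{A[s_1/x_1,\dots,s_n/x_n]}$. The theorem is then the special case where each variable is substituted for itself: every variable $x:A$ is normal and neutral, so $\crtre$ of Proposition \ref{prop:condred} gives $x\in\red_A$, the identity substitution is therefore a reducible substitution, and $t=t[x_1/x_1,\dots,x_n/x_n]\in\red_A$ follows. Before starting the induction I would record one structural point needed for the conversion rule of Table \ref{tab:type-conv-rules}: since $\red_A$ is read off from the outer shape of $A$, one wants it to depend only on the normal form of $A$, so that $C\equiv_\beta D$ forces $\red_C=\red_D$. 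This is legitimate because type constructors are confluent (Theorem \ref{thm:confluence}) and, as established in Section \ref{sec:normalisation}, strongly normalising, so every type has a unique normal form on which the reducibility clauses can be anchored; the conversion rule then becomes transparent.

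The inductive cases coming from the logical rules are all discharged by combining the induction hypotheses under the extended substitution with the defining clauses of $\red$ and the three preservation lemmas. For a $\lambda$-abstraction I would invoke the Abstraction lemma (Lemma \ref{lem:abstraction}) on the body; for a pair $\langle s,t\rangle$ the Pairing lemma (Lemma \ref{lem:pairing}); and for a choice term $t\cho{p}s$ the Nondeterministic choice lemma (Lemma \ref{lem:ndchoice}). An application $ts$ follows from $t\in\red_{\all y:A.B}$, $s\in\red_A$ and the defining clause of $\red_{\all y:A.B}$, while a projection $t\pi_i$ and an unwrapping $t\nu$ of a $\chot$-term follow from the defining clauses of $\red_{B\et C}$ and $\red_{\chot B}$ respectively. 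Any term former possessing no reduction rule of its own—such as $\efq(t)$—is neutral and can only reduce inside its immediate subterm; since that subterm is reducible, hence strongly normalising by $\cruno$, a subsidiary induction on its reduction length together with $\crtre$ places the whole term in the relevant $\red_A$.

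The genuine obstacle is the oracle apparatus. The listed reducibility clauses cover neither $\Sigma$-types nor oracle constants, so the interpretation must first be extended, presumably by setting $\red_{\Sigma A}=\{t:\Sigma A\mid t\nu\in\red_A\}$ in exact analogy with the $\chot$ clause, and re-verifying $\cruno,\crdue,\crtre$ for this new clause just as in the $\chot$ case of Proposition \ref{prop:condred}. The deeper difficulty is that the reductions of $o\nu$ and $(o_1t)\nu$ are \emph{global}: by Table \ref{tab:evaluation} the redex and its contractum $\calc[s_1]_1\dots[s_n]_n$ are fixed by the whole surrounding context $\calc[\;]_1\dots[\;]_n$ through the oracular function $f_o$, so reducibility ceases to be a purely local, compositional property and the standard neutral-term argument does not apply verbatim. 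The decisive leverage is that, by Definition \ref{def:oracles}, each value $s_i=f_o(\calc[\;]_1\dots[\;]_n,i)$ is a \emph{closed, oracle-free} term of the prescribed type $A$ (respectively $B[t/x]$); such a term is typeable and therefore reducible by the main induction, and in particular strongly normalising by $\cruno$.

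I would then argue that any reduction sequence meeting an oracle occurrence must eventually contract it into one of these reducible, oracle-free terms, so oracle steps can never seed an infinite reduction, and recover reducibility of $o$, $o_1t$, $o\nu$ and $(o_1t)\nu$ by applying $\crtre$ to the relevant neutral oracle redices. Pinning down this last move—reconciling the context-sensitive, non-local oracle reductions with the local structure of the reducibility candidates, and ensuring the extended $\Sigma$-clause meshes with it—is where the real work sits, and is the step I expect to be the main difficulty of the proof.
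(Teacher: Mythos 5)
Your proposal follows essentially the same route as the paper on every case the paper actually treats: the paper proves exactly the substitution-closed strengthening you state, by induction on the size of $t$, discharging the variable case from the hypothesis on the $s_i$, the pair, abstraction and choice cases via Lemmas \ref{lem:pairing}, \ref{lem:abstraction} and \ref{lem:ndchoice}, and the application, projection and $\nu$ cases via the defining clauses of $\red_{\all y:F_1.F_2}$, $\red_{F_1\et F_2}$ and $\red_{\chot F_1}$; the corollary is then obtained by taking $s_i=x_i$, exactly as you say. Where you differ is in what you flag as the real work: the paper's proof contains no case for $\efq$, no case for oracle constants or for the terms $o\nu$ and $(o_1t)\nu$, no clause for $\Sigma$-types in the definition of $\red$, and no discussion of whether the conversion rule of Table \ref{tab:type-conv-rules} is compatible with the reducibility predicates (i.e.\ whether $C\equiv_\beta D$ forces $\red_C=\red_D$). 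So the difficulties you isolate --- extending $\red$ to $\Sigma A$ by a $\chot$-style clause, reconciling the context-global oracle reductions of Table \ref{tab:evaluation} with the local, compositional character of reducibility, and exploiting the closedness and oracle-freeness of the outputs of $f_o$ from Definition \ref{def:oracles} --- are not resolved elsewhere in the paper; they are genuine omissions of the published argument rather than gaps in your proposal relative to it, and your sketch of how to close them ($\crtre$ applied to the neutral oracle redices, whose one-step reducts are closed oracle-free terms of the right type and hence reducible) is a sensible plan that the paper itself does not carry out. In short: on the shared cases you match the paper's proof exactly, and the cases you correctly identify as the main difficulty are simply absent from it.
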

\begin{proof}We prove the following stronger statement:
\begin{quote}
For any type $F:*$, term $t:F$ with free variables  $x_1:A_1, \dots , x_n :A_n$ and sequence of terms $s_1 \in \red _{A_1}, \dots , s_n \in \red _{A_n}$, we have that $t[s_1/x_1 , \dots , s_n /x_n ]\in \red _{F[s_1/x_1 , \dots , s_n /x_n ]}$.
\end{quote} 
Let us denote by $[\overline{s}/\overline{x}]$ the substitution $[s_1/x_1 , \dots , s_n /x_n ]$. The proof is by induction on the number of symbol occurrences in $t$.
\begin{itemize}
\item $t=x_i$ for some $1\leq i \leq n$. Since $t[\overline{s}/\overline{x}]=s_i$ and $s_i\in \red _{A_i}$ by assumption, we obviously have $t[\overline{s}/\overline{x}] \in \red _{F[\overline{s}/\overline{x}]}$ where $F[\overline{s}/\overline{x}]=A_i$.  

\item $t=\lan u,v\ran$. By induction hypothesis, we have that $u[\overline{s}/\overline{x}] \in \red _{F_1[\overline{s}/\overline{x}]}$ and $v[\overline{s}/\overline{x}] \in \red _{F_2[\overline{s}/\overline{x}]}$ for suitable types $F_1, F_2$. By Lemma \ref{lem:pairing}, we know that $\lan u[\overline{s}/\overline{x}] , v[\overline{s}/\overline{x}] \ran \in \red _{F_1[\overline{s}/\overline{x}]\et F_2[\overline{s}/\overline{x}]}$. But $\lan u[\overline{s}/\overline{x}] , v[\overline{s}/\overline{x}] \ran= \lan u,v\ran[\overline{s}/\overline{x}]$ and $F_1[\overline{s}/\overline{x}]\et F_2[\overline{s}/\overline{x}]=(F_1\et F_2)[\overline{s}/\overline{x}]$, and hence $\lan u,v\ran[\overline{s}/\overline{x}] \in \red _{(F_1\et F_2)[\overline{s}/\overline{x}]}$ as desired. 

\item $t=u\pi_0$ and $t=u\pi_1$. By induction hypothesis, $u[\overline{s}/\overline{x}]\in \red _{(F_1\et F_2)[\overline{s}/\overline{x}]}$ for any sequence $\overline {u}=s_1, \dots , s_n$ such that $s_1 \in \red _{A_1}, \dots , s_n \in \red _{A_n}$. By definition of $\red _{(F_1\et F_2)[\overline{s}/\overline{x}]}$ and since $(F_1\et F_2)[\overline{s}/\overline{x}]=F_1[\overline{s}/\overline{x}]\et F_2[\overline{s}/\overline{x}]$, this means that $(u[\overline{s}/\overline{x}])\pi_i\in \red _{F_i[\overline{s}/\overline{x}]}$ for $i\in \{0,1\}$. But since $(u[\overline{s}/\overline{x}])\pi_i = (u\pi_i)[\overline{s}/\overline{x}]$, we have that $ (u\pi_i)[\overline{s}/\overline{x}] \in \red _{F_i[\overline{s}/\overline{x}]}$, as desired.  
 
\item $t=\lambda y. u$. By induction hypothesis, for suitable types $F_1$ and $F_2$, we have that $u[\overline{s}/\overline{x}, v/y] \in \red _{F_2[\overline{s}/\overline{x}, v/y]}$ for any $v\in \red _{F_1}$. By Lemma \ref{lem:abstraction}, this implies that $\lambda y. (u[\overline{s}/\overline{x}])\in \red _{\all y:F_1.(F_2[\overline{s}/\overline{x}])}$. But the assumption that $\overline{x}$ are all free in $t=\lambda y. u$ implies that $y\notin \overline{x}$, which in turn implies that  $\lambda y. (u[\overline{s}/\overline{x}])= (\lambda y. u)[\overline{s}/\overline{x}]$ and that $\all y:F_1.(F_2[\overline{s}/\overline{x}]) = (\all y:F_1.F_2)[\overline{s}/\overline{x}]$. Hence, $ (\lambda y. u)[\overline{s}/\overline{x}] \in \red _{(\all y:F_1.F_2)[\overline{s}/\overline{x}]}$, as desired.

\item $t=uv$. By induction hypothesis, $u[\overline{s}/\overline{x}]\in \red _{(\all y:F_1. F_2)[\overline{s}/\overline{x}]}$ and $v[\overline{s}/\overline{x}]\in \red _{F_1[\overline{s}/\overline{x}]}$ for any sequence $\overline {s}=s_1, \dots , s_n$ such that $s_1 \in \red _{A_1}, \dots , s_n \in \red _{A_n}$. Since $ (\all y:F_1.F_2)[\overline{s}/\overline{x}]=\all y:(F_1[\overline{s}/\overline{x}]).(F_2[\overline{s}/\overline{x}]) $ and by definition of $\all y:(F_1[\overline{s}/\overline{x}]).(F_2[\overline{s}/\overline{x}]) $, this means that $(u[\overline{s}/\overline{x}])(v[\overline{s}/\overline{x}])\in \red _{F_2[\overline{s}/\overline{x}]}$. But since $(u[\overline{s}/\overline{x}])(v[\overline{s}/\overline{x}]) = (uv)[\overline{s}/\overline{x}]$, we have that $(uv)[\overline{s}/\overline{x}] \in \red _{F_2[\overline{s}/\overline{x}]}$, as desired.

\item $t= u\cho{p}v$. By induction hypothesis, we have that $u[\overline{s}/\overline{x}] , v[\overline{s}/\overline{x}] \in \red _{F_1[\overline{s}/\overline{x}]}$ for a suitable type $F_1$. By Lemma \ref{lem:ndchoice}, we know that $ u[\overline{s}/\overline{x}] \cho{p} v[\overline{s}/\overline{x}]  \in \red _{\chot F_1[\overline{s}/\overline{x}]}$. But $ u[\overline{s}/\overline{x}] \cho{p} v[\overline{s}/\overline{x}]   = (u \cho{p} v)[\overline{s}/\overline{x}]  $ and $\chot F_1[\overline{s}/\overline{x}]=(\chot F_1)[\overline{s}/\overline{x}]$. Hence, $ (u \cho{p} v)[\overline{s}/\overline{x}] \in \red _{(\chot F_1)[\overline{s}/\overline{x}]}$ as desired.

\item $t=u\nu$. By induction hypothesis, $u[\overline{s}/\overline{x}]\in \red _{(\chot F_1)[\overline{s}/\overline{x}]}$ for any sequence $\overline {u}=s_1, \dots , s_n$ such that $s_1 \in \red _{A_1}, \dots , s_n \in \red _{A_n}$. Since $(\chot F_1)[\overline{s}/\overline{x}]=\chot F_1[\overline{s}/\overline{x}]$ and by definition of $\red _{\chot F_1[\overline{s}/\overline{x}]}$, this means that $(u[\overline{s}/\overline{x}])\nu\in \red _{F_1[\overline{s}/\overline{x}]}$. But since $(u[\overline{s}/\overline{x}])\nu = (u\nu)[\overline{s}/\overline{x}]$, we have that $ (u\nu)[\overline{s}/\overline{x}] \in \red _{F_1[\overline{s}/\overline{x}]}$, as desired.  
\end{itemize}
\end{proof}

\begin{theorem}[Reducibility theorem for type constructors]\label{thm:red-tc}
For any kind $\Phi:\square$ and type constructor $\vphi :\Phi$, $\vphi \in \red_\Phi$.
\end{theorem}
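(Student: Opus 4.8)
The plan is to imitate the proof of Theorem~\ref{thm:red}, proving a substitution-closed strengthening by induction on the number of symbol occurrences in $\vphi$. The decisive simplification, already signalled in the remark following the definition of the sets of reducibles, is that the terms to which a type constructor is applied are never reduced and need not be reducible; accordingly the terms I substitute will range over \emph{all} well-typed terms, not merely the reducible ones. Concretely, the statement to prove is:
\begin{quote}
For any kind $\Phi:\square$, type constructor $\vphi:\Phi$ with free term variables $x_1:A_1,\dots,x_n:A_n$, and any terms $s_1:A_1,\dots,s_n:A_n$, we have $\vphi[\overline s/\overline x]\in\red_{\Phi[\overline s/\overline x]}$.
\end{quote}
Taking $\overline s=\overline x$ (each $x_i$ is itself a term of type $A_i$) recovers $\vphi\in\red_\Phi$. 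Carrying a substitution in the statement is forced by the $\lambda$-case: since $\psi[s/x]$ may contain more symbols than $\lambda x:A.\psi$, the induction hypothesis can only be applied to the strictly smaller $\psi$ while it also carries the extra assignment $s/x$.

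Next I would run the case analysis on the shape of $\vphi$, always after applying $[\overline s/\overline x]$ and noting that substituting terms for term variables creates no new type-constructor redex, since the only redexes are $\beta$-redexes $(\lambda y:\vphi_1.\vphi_2)t$ whose head is a $\lambda$, never a term. In the base case $\vphi=\alpha t_1\dots t_k$, the substituted constructor is normal and neutral, so $\crtre$ of Proposition~\ref{prop:condred} places it in $\red_{\Phi[\overline s/\overline x]}$ with a vacuously satisfied premise, whatever the kind $\Phi$. For $\vphi=\lambda x:A.\psi$ I would invoke the abstraction lemma for type constructors (Lemma~\ref{lem:abstraction-tc}): its hypothesis ``for every $s:A$, $\psi[\overline s/\overline x][s/x]\in\red_{\Xi[\overline s/\overline x][s/x]}$'' is precisely the induction hypothesis for $\psi$ with the enlarged substitution, and its conclusion is the desired membership in $\red_{\Pi x:A.\Xi}$. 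For $\vphi=\psi t$, the induction hypothesis gives $\psi[\overline s/\overline x]\in\red_{(\Pi y:B.\Xi)[\overline s/\overline x]}$; unfolding the definition of reducibility at a $\Pi$-kind and instantiating the universally quantified argument at the term $t[\overline s/\overline x]:B[\overline s/\overline x]$ yields $(\psi[\overline s/\overline x])(t[\overline s/\overline x])=(\psi t)[\overline s/\overline x]\in\red_{(\Xi[t/y])[\overline s/\overline x]}$, using Lemma~\ref{lem:substitution} to see that the iterated substitutions commute.

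The remaining constructors $\all x:A.\psi$, $\chot\psi$ and $\Sigma\psi$ all have kind $*$, so reducibility here amounts to strong normalisation. I would apply the induction hypothesis to each immediate constructor subpart, namely the body $\psi$ and, for $\all$, the domain $A$ (both strictly smaller in symbol count and of kind $*$), to obtain reducibility and hence, by $\cruno$, strong normalisation. Since every one-step reduction of $(\all x:A.\psi)[\overline s/\overline x]$, of $(\chot\psi)[\overline s/\overline x]$, or of $(\Sigma\psi)[\overline s/\overline x]$ acts within one of these strongly normalising subparts, no infinite reduction sequence can exist, so the whole constructor is strongly normalising and lies in $\red_*$.

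The proof is largely routine once the statement is set up correctly, so the principal difficulty is bookkeeping rather than a genuine obstruction: I must check that the kinds transform coherently under the term substitution, in particular that $\red_{\Phi[\overline s/\overline x]}$ is the right target in each case and that the $\Pi$-kind instantiation commutes with $[\overline s/\overline x]$, and that the domain types $A$ occurring in $\all$ and $\lambda$ are correctly accounted for in the strong-normalisation arguments even though Definition~\ref{def:subconstructor} does not count them as subconstructors; this is exactly why the induction must be on symbol count rather than on the subconstructor relation. The conceptual point worth stressing is that, because arguments of type constructors are neither reduced nor required to be reducible, none of the nested inductions on reduction lengths used in Proposition~\ref{prop:condred} reappear here: the base case collapses to a single appeal to $\crtre$, and the whole argument is strictly lighter than its term-level analogue Theorem~\ref{thm:red}.
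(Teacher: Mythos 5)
Your proposal follows essentially the same route as the paper's proof: the same strengthened statement with substitutions of arbitrary well-typed (not necessarily reducible) terms, induction on symbol count, an appeal to $\crtre$ for the neutral normal base case, Lemma~\ref{lem:abstraction-tc} for $\lambda$-abstractions, and unfolding of the definition of $\red_{\Pi y:B.\Xi}$ for applications. If anything you are more thorough than the paper: its case analysis stops at $\alpha$, $\lambda y.\psi$ and $\psi v$, whereas you also treat $\all x:A.\psi$, $\chot\psi$ and $\Sigma\psi$ (reducing them to strong normalisation of their subparts at kind $*$) and you make explicit, via Lemma~\ref{lem:substitution}, the commutation of substitutions needed in the application case.
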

\begin{proof}We prove the following stronger statement:
\begin{quote}
For any  kind $\Phi:\square$ and type constructor $\vphi :\Phi$ with free variables $x_1:A_1, \dots , x_n :A_n$ and sequence of terms 
$s_1: A_1, \dots , s_n :A_n$, 
we have that $\vphi [s_1/x_1 , \dots , s_n /x_n ]\in \red _{\Phi[s_1/x_1 , \dots , s_n /x_n ]}$.
\end{quote} 
Let us denote by $[\overline{s}/\overline{x}]$ the substitution $[s_1/x_1 , \dots , s_n /x_n ]$. The proof is by induction on the number of symbol occurrences in $\vphi$.
\begin{itemize}
\item $\vphi  = \alpha$. Now, if $\vphi[\overline{s}/\overline{x}]=\alpha$ then $  \vphi[\overline{s}/\overline{x}]=\vphi$. Moreover, since $\alpha :\Phi$ is normal and neutral we can infer by $\crtre$ that $\alpha \in \red _\Phi$. Therefore, $\vphi \in \red _\Phi$.  

\item $\vphi =\lambda y. \psi$. By induction hypothesis, for a suitable kind $\Phi$ and type $F$, we have that $\psi [\overline{s}/\overline{x}, v/y] \in \red _{\Phi[\overline{s}/\overline{x}, v/y]}$ for any 
$v:F$. 
By Lemma \ref{lem:abstraction-tc}, this implies that $\lambda y. (\psi[\overline{s}/\overline{x}])\in \red _{\Pi y:F.(\Phi[\overline{s}/\overline{x}])}$. But the assumption that $\overline{x}$ are all free in $\vphi =\lambda y. \psi$ implies that $y\notin \overline{x}$, which in turn implies that  $\lambda y. (\psi[\overline{s}/\overline{x}])= (\lambda y. \psi)[\overline{s}/\overline{x}]$ and that $\Pi y:F.(\Phi[\overline{s}/\overline{x}]) = (\Pi y:F.\Phi)[\overline{s}/\overline{x}]$. Hence, $ (\lambda y. \psi)[\overline{s}/\overline{x}] \in \red _{(\Pi y:F.\Phi)[\overline{s}/\overline{x}]}$, as desired.

\item $\vphi =\psi v$. By induction hypothesis, $\psi [\overline{s}/\overline{x}]\in \red _{(\all y:F. \Phi)[\overline{s}/\overline{x}]}$ and 
$v[\overline{s}/\overline{x}]: F[\overline{s}/\overline{x}]$ 
for any sequence $\overline {s}=s_1, \dots , s_n$ such that $s_1 \in \red _{A_1}, \dots , s_n \in \red _{A_n}$. Since $ (\Pi y:F.\Phi)[\overline{s}/\overline{x}]=\Pi y:(F[\overline{s}/\overline{x}]).(\Phi[\overline{s}/\overline{x}]) $, by definition of $\Pi y:(F[\overline{s}/\overline{x}]).(\Phi[\overline{s}/\overline{x}]) $, we have that $(\psi [\overline{s}/\overline{x}])(v[\overline{s}/\overline{x}])\in \red _{\Phi[\overline{s}/\overline{x}]}$. But since $(\psi [\overline{s}/\overline{x}])(v[\overline{s}/\overline{x}]) = (\psi v)[\overline{s}/\overline{x}]$, we obtain $(\psi v)[\overline{s}/\overline{x}] \in \red _{\Phi[\overline{s}/\overline{x}]}$, as desired.  

\end{itemize}
\end{proof}

\begin{corollary}[Strong normalisation]
For any type $A:*$ and term $t:A$, $t$ is strongly normalising. For any kind $\Phi:\square$ and type constructor $\vphi :\Phi $, $\vphi $ is strongly normalising.
\end{corollary}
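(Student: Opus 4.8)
The plan is to read this corollary off directly from the two reducibility theorems together with clause $\cruno$ of Proposition \ref{prop:condred}, so essentially no new work is required. First I would handle the term case: given a type $A:*$ and a term $t:A$, Theorem \ref{thm:red} tells me that $t\in\red_A$. By Proposition \ref{prop:condred}, the set $\red_A$ satisfies $\cruno$, whose term clause asserts precisely that every element of $\red_A$ is strongly normalising. Chaining these two facts yields that $t$ is strongly normalising.

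Next I would treat the type-constructor case in exactly the same shape. For a kind $\Phi:\square$ and a type constructor $\vphi:\Phi$, Theorem \ref{thm:red-tc} gives $\vphi\in\red_\Phi$, and the $\square$-clause of $\cruno$ (again via Proposition \ref{prop:condred}) guarantees that every element of $\red_\Phi$ is strongly normalising; hence $\vphi$ is strongly normalising. This exhausts both halves of the statement.

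The point I would stress is that there is essentially no obstacle left at this stage, since all the real difficulty has already been discharged. The delicate parts were establishing that the candidate sets $\red_A$ and $\red_\Phi$ genuinely satisfy $\cruno$, $\crdue$ and $\crtre$ (Proposition \ref{prop:condred}), and then that every typed term and every kinded constructor actually lands in its reducibility set despite the presence of probabilistic choice, oracle constants, and the dependency of types on terms (Theorems \ref{thm:red} and \ref{thm:red-tc}). Once those are in hand, strong normalisation is just the conjunction of membership in the reducibility set with the strong-normalisation clause $\cruno$, so the corollary is the routine payoff of the reducibility method rather than a step requiring any new idea.
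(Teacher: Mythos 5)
Your proof is correct and matches the paper's own argument: the paper likewise obtains the corollary by specialising Theorems \ref{thm:red} and \ref{thm:red-tc} (explicitly taking the substitution $s_i = x_i$, which is legitimate since variables are neutral and normal, hence reducible by $\crtre$) and then invoking $\cruno$. No difference in substance.
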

\begin{proof}We just need to apply Theorems \ref{thm:red} and \ref{thm:red-tc} with $s_i =x_i$ for each $i\in \{1\dots, n\}$. The result then follows by $\cruno$.\end{proof}

\section{A theory of program evaluation and trustworthiness}\label{sec:conclusion}

We present now an extension of the system that aims at providing a formal framework for reasoning about computational trustworthiness of deterministic and nondeterministic programs. Indeed, the extensive deployment of probabilistic algorithms has radically changed our perspective on several well-established computational notions. {\it Correctness} is probably the most basic one. While a typical probabilistic program cannot be said to compute {\it the} correct result, we often have quite strong expectations about the frequency with which it should return certain outputs. Instead of talking about correctness, then, it is possible to introduce a more general notion that captures this intuition: let us call a probabilistic computational process {\it trustworthy} if the frequency of its outputs is compliant with a probability distribution which models its expected behaviour.


The idea that we will employ to define a {\it trustworthiness} predicate $\trust$ that formalises this intuition is similar to the one at the basis of the conversion rules in Table \ref{tab:type-conv-rules}. Indeed, trust is a relation that essentially depends on our experience of the past behaviour of a program during evaluation. The more often we experience that the program meets our expectations with respect to the frequency of its outputs, the more we will be inclined to trust the program. Hence, the definition of trust must refer to the evaluation---or better, a number of past evaluations---of the program. Since we still want to be able to apply the trust predicate during the typing phase, and not to wait for the end of the dynamic evaluation phase, we also include in the system evaluation rules corresponding to certain reductions of terms. Thus we will be able to study during the typing phase the runtime behaviour of terms and establish statically whether or not we trust the terms. 
%
%
%


In order to statically study the runtime behaviour of programs, we introduce the evaluation predicate $\Mapsto$. Before presenting its rules, let us then define some useful notions.

\begin{definition}[Static reduction sequence]\label{def:seq-static-red}
A sequence  $r_1 , \dots , r_n $ of static reductions is a list of quadruples of the form $(\calc[s], \calc[s'], q , \rho)$ such that, for $1\leq i\leq n $, the second element of $r_i$ and the first element of $r_{i+1}$ are the same term and, moreover, for each quadruple $r_1 , \dots , r_n $, the following holds:
\begin{itemize} 
\item $\calc[s]\mapsto _q \calc[s']$ according to Table \ref{tab:evaluation},
\item $\calc[s]$ is of the suitable form to be displayed to the right of $:$ in the premiss of one of the evaluation rules in Table \ref{tab:eval-pred-rules} with $\calc[s']$ displayed, in the conclusion of the rule, to the right of $:$,  
\item $\rho$ is an element of the set of labels $\{\beta , \textit{left}, \textit{right}, \omega\}$ such that
\begin{itemize}
\item $\rho=\beta$ when $s=(\lambda x.u)v \mapsto _1 u[v/x]=s'$,
\item $\rho= \textit{left}$ when $s= (u\cho{p}v)\nu \mapsto _p u=s'$, 
\item $\rho=\textit{right} $ when $s= (u\cho{p}v)\nu \mapsto _{1-p} v=s'$, and 
\item $\rho=\omega$ if either $s=o\nu$ or $s=(ou)\nu$ for an oracle constant $o$.  
\end{itemize}
\end{itemize}
\end{definition}

\begin{definition}[Sequence of terms produced by a static reduction sequence]
\label{def:prod-seq-static-red}A sequence of terms $t_1 , \dots , t_n$ is produced by the static reduction sequence $r_1 , \dots , r_{n-1} $ if, and only if,  $t_n$ is the second element of $r_{n-1}$, and, for $1\leq i\leq n-1$, $t_i$ is the first element of $r_i$.
\end{definition}

\begin{definition}[$\not \equiv_{\mathrm{ND}}$]
For any two sequences of terms $t_1 , \dots , t_n$ and $s_1, \dots s_m$, we say that $t_1 , \dots , t_n\not\equiv_{\mathrm{ND}} s_1, \dots s_m$ if, and only if, all following points hold
\begin{itemize}
\item $t_1 , \dots , t_n$ is produced by $r_1 , \dots , r_{n-1} $ as specified in Definition \ref{def:prod-seq-static-red}
\item $s_1, \dots s_m$ is produced by $r'_1 , \dots , r'_{m-1} $as specified in Definition \ref{def:prod-seq-static-red}
\item the fourth element of none of the quadruples $r_1 , \dots , r_{n-1} ,r'_1 , \dots , r'_{m-1}$ is $\omega$
\item there is an $i$ such that 
\begin{itemize}
\item $1\leq i\leq \min (n-1, m-1)$,
\item $r_j=r'_j$ for $j<i$,
\item $r_i\neq r'_i$, and
\item either $r_i=(\calc[u],\calc[v],q,\textit{left})$ and $r'_i=(\calc[u],\calc[v'],1-q,\textit{right})$,\\or $r_i=(\calc[u],\calc[v'],1-q,\textit{right})$ and $r'_i=(\calc[u],\calc[v],q,\textit{left})$
\end{itemize}
\end{itemize}
\end{definition}

The condition on the rule for adding up the probabilities of the reductions of the same term that yield the same output guarantee that different  $\kappa _i, \kappa _j \in \{ \kappa _1 , \dots , \kappa_n\}$ differ because have been produced by a different nondeterministic choice. In other words, it cannot be that we sum the probabilities of a pair $\kappa _i, \kappa _j $ that have been produced by the same nondeterministic choices. If we enforce any strategy that does not leave any freedom of choice when reducing a deterministic redex, then this condition boils down to requiring that $\kappa_i \neq \kappa _j$ when $i\neq j$.

We can now introduce the rules for the evaluation predicate $\Mapsto$. They are presented in Table \ref{tab:eval-pred-rules}.

The first six rules enable us to construct and extend static reduction sequences. The first one enables us to conclude that any program $t$ evaluates to itself with probability $1$ by the trivial reduction sequence $[t]$. The second rule enables us to extend a reduction sequence by a $\beta$-reduction step, which does not change the probability of the reduction. The third and fourth rule enable us to extend a reduction sequence by a nondeterministic choice step. In this case, the probability of the original reduction sequence is multiplied by the probability of the nondeterminstic choice at hand. The fifth rule is required since a term $t$ might be able to reduce to a term $s$ through different reduction sequences $\kappa _1 ,\dots , \kappa_n$. The rule enables us to represent by $[t,[\kappa _1 /\dots / \kappa_n], s]$ the different ways in which $t$ can reduce to $s$ and suitably compute the overall probability that this happens. The four remaining rules enable us to handle the evaluation of oracles. Two rules handle the actual reduction and two handle the compute of probabilities.

\begin{figure*}[t]\centering

\[\vcenter{
\infer{\Gamma \vdash\,\Mapsto\,  : (\Pi x:A.\Pi y:A. \ast)}{\Gamma\vdash \Pi x:A.\Pi y:A. \ast  :\square  }}  \qquad 
\vcenter{\infer{\Gamma  \vdash [ t]: t  \Mapsto ^1 t}{\Gamma \vdash t:A}}\]

\[\infer{\Gamma \vdash [t, \kappa , \mathcal{C}[(\lambda x:A .t )s] , \mathcal{C}[t[s/x]]]: t \Mapsto ^p \mathcal{C}[t[s/x]]}{\Gamma \vdash [ t, \kappa , \mathcal{C}[(\lambda x:A .t )s] ] :t \Mapsto ^p  \mathcal{C}[(\lambda x:A .t )s]}\]

\[\infer{\Gamma \vdash [t, \kappa , \mathcal{C}[(t\cho{p}s)\nu], \mathcal{C}[t]] :t \Mapsto ^{q\cdot p} \mathcal{C}[t]}{\Gamma \vdash [t, \kappa ,  \mathcal{C}[(t\cho{p}s)\nu]] : t \Mapsto ^q \mathcal{C}[(t\cho{p}s)\nu]} 
\qquad 
\infer{\Gamma \vdash  [t, \kappa , \mathcal{C}[(t\cho{p}s)\nu], \mathcal{C}[s]]:t \Mapsto ^{q\cdot (1-p)}  \mathcal{C}[s]}{\Gamma \vdash [t, \kappa , \mathcal{C}[(t\cho{p}s)\nu]] :t \Mapsto ^q  \mathcal{C}[(t\cho{p}s)\nu]}\]

\[\infer{\Gamma \vdash [t, [\kappa _1 / \dots /\kappa_n] , s]:t\Mapsto_{p_1+\dots +p_n} s}{\Gamma \vdash [t, \kappa _1, s] :t\Mapsto_{p_1} s &\dots & \Gamma \vdash [t, \kappa _n,s] :t\Mapsto_{p_n}s} 
\]where, for any $\kappa _i, \kappa_j\in \{\kappa _1 , \dots , \kappa_n\}$, if $i\neq j$ then $\kappa _i \not\equiv_{\mathrm{ND}} \kappa_j$

\[ \infer{\Gamma \vdash [ \lan o_0\nu, \dots , o_0\nu \ran , \lan s_1 , \dots , s_n \ran] : \lan o_0\nu, \dots , o_0\nu \ran \Mapsto ^1 \lan s_1 , \dots , s_n \ran}{\Gamma \vdash  [ \lan o_0\nu, \dots , o_0\nu \ran] : \lan o_0\nu, \dots , o_0\nu \ran\Mapsto ^1 \lan o_0\nu, \dots , o_0\nu \ran } 
\]\[
\infer{\Gamma \vdash [\lan (o_1t_1)\nu, \dots , (o_1t_n)\nu \ran ,  \lan u_1 , \dots , u_n \ran] :\lan (o_1t_1)\nu, \dots , (o_1t_n)\nu \ran \Mapsto ^1 \lan u_1 , \dots , u_n \ran}{\Gamma \vdash [ \lan (o_1t_1)\nu, \dots , (o_1t_n)\nu \ran ] : \lan (o_1t_1)\nu, \dots , (o_1t_n)\nu \ran\Mapsto ^1 \lan (o_1t_1)\nu, \dots , (o_1t_n)\nu \ran }\]

where $o_0$ is a $0$-ary oracle constant; $\lan o_0\nu, \dots , o_0\nu \ran= \calc  [o_0]_1 \dots  [o_0]_n$; $f_{o_0}(\calc  [\;]_1 \dots  [\;]_n , i )=s_i$ for $1\leq i\leq n$; $o_1$ is a $1$-ary oracle constant;  $\lan (o_1t_1)\nu , \dots , (o_1t_n)\nu\ran = \cald[o_1]_1\dots[o_1]_n$; and $f_{o_1}(\cald[\;]_1\dots[\;]_n, i )=u_i$ for $1\leq i\leq n$

%



\[\infer{\Gamma \vdash [\lan (ot)\nu, \dots , (ot)\nu \ran , \lan s_1, \dots , s_n \ran ]: ot\Mapsto ^{m/n} s_i}{\Gamma \vdash  [\lan (ot)\nu, \dots ,  (ot)\nu \ran ,  \lan s_1, \dots , s_n \ran ]:\lan  (ot)\nu, \dots ,  (ot)\nu \ran  \Mapsto ^1 \lan s_1, \dots , s_n \ran}\]\[ \infer{\Gamma \vdash [\lan o\nu, \dots , o\nu \ran , \lan s_1, \dots , s_n \ran ]: o\nu\Mapsto ^ {m/n} s_i}{\Gamma \vdash  [\lan o\nu, \dots , o\nu \ran , \lan s_1, \dots , s_n \ran]:\lan o\nu, \dots , o\nu \ran\Mapsto ^1 \lan s_1, \dots , s_n \ran}
\]where $m$ is the number of occurrences of $s_i$ in $s_1, \dots , s_n$


%
%


\caption{Rules for the  Evaluation Predicate $\Mapsto $}
\label{tab:eval-pred-rules}
\end{figure*}

%


We can now introduce the trustworthiness predicate $\trust$ and the relative constant term $\tconst$.

Fixed a target distribution $f$ describing the desired behaviour of all programs in probabilistic terms, a definition of trust with respect to a program  (the one that will instantiate the variable $x$) is the following:
\[\all x:A. (\all y:A.( f x y \neq 0 \impl \neg  \all  z:\boldq. \neg ( x \Mapsto _{z} y\wedge   \vert (fxy)-z\vert <fx\varepsilon))\impl \trust (x, f ))\]where $f: A\impl A\impl\boldq $ is a function that computes, for each term $x:A$, the probability with which $x$ should return $y$. In other words, $fx:A\impl \boldq$ is the probability function according to which $x$ should compute its outputs, $fx$  describes the desired behaviour of $x$. The application $fx\thr$, on the other hand, returns a threshold (a rational number between $0$ and $1$) that we use to check the trustworthiness of $x$.

This definition is added to the system as the following axiom:
\[\tconst :\all x:A. (\all y:A.( f x y \neq 0 \impl \neg  \all  z:\boldq. \neg ( x \Mapsto _{z} y\wedge   \vert (fxy)-z\vert <fx\varepsilon))\impl \trust (x, f ))\]where $\tconst$ is a constant term.

The system also need to include the following rule for introducing suitable sentences universally quantifying over the set of outputs of a term $t$ that have non-zero probability of being returned by $t$: 
\[\infer{\Gamma \vdash \bolds (s_0, s_1, \dots , s_n, s'_1, \dots , s'_n): \all y. (fty \neq 0 \impl A)}{\Gamma \vdash s_0:m_1 +\dots +m_n=1&\{\Gamma \vdash s_i: t\Mapsto _{m_i} y_i\}_{1\leq i \leq n}& \{ \Gamma \vdash s'_j: A[y_j/y]\}_{1\leq j \leq n}}\]
where the notation $\{\Gamma \vdash t_i:B_i \}_{1\leq i \leq n}$ indicates the list of premisses $\Gamma \vdash t_1:B_1 \;\;\dots \;\; \Gamma \vdash t_n:A_n $. 

In order to prove the left-hand side of the outermost implication of the definition of $\trust$, we can use an application of this rule in which $A=\neg  \all  z. \neg (t\Mapsto _z y \et \vert (fty)-z\vert <fx\varepsilon$.

%
%
%

\bibliographystyle{apalike}
\bibliography{bib-dependent-types-trust.bib}

\end{document}